%%%%%%%%%%%%%%%%%%%%%%%%%%%%%%%%%%%%%%%%%%%%%%%%%%%%%%%%%%%%%%%%%%%
%
\documentclass[10pt]{iopart}
\expandafter\let\csname equation*\endcsname\relax
\expandafter\let\csname endequation*\endcsname\relax
\usepackage{amsmath}
\usepackage{amsthm}
\usepackage{iopams}
\usepackage{bbm}
\usepackage{braket}
\usepackage{graphicx}
\usepackage{epstopdf}
\usepackage{enumitem}
\usepackage[colorlinks=true,linkcolor=blue,citecolor=magenta,urlcolor=blue]{hyperref}
\usepackage{hyperref} %hyperlinks in document
\usepackage[dvipsnames]{xcolor}
\usepackage{soul}
\usepackage{color}
\usepackage{multirow}
\usepackage{cite}
\usepackage{mathtools}
\usepackage{subcaption}
\usepackage{physics}

\setul{}{1.5pt}

%commenting

\setstcolor{blue}
\newcommand\blk[1]{\color{black}#1}

\newcommand{\ro}[1]{\left( {#1} \right)}

\theoremstyle{plain}
\newtheorem{definition}{Definition}
\newtheorem{theorem}[definition]{Theorem}
\theoremstyle{plain}
\newtheorem{lemma}[definition]{Lemma}
\newtheorem{corollary}[definition]{Corollary}

%remove the indent for the purpose of the preprint.
\usepackage{sidecap}
\usepackage{etoolbox}
\AtBeginEnvironment{figure}{\mathindent=0pt }

\begin{document}

\title{On the power of one pure steered state for EPR-steering with a pair of qubits}

\author{Qiu-Cheng Song$^{1,2}$, Travis J. Baker$^1$  and Howard M. Wiseman$^1$}

\address{$^1$ Centre for Quantum Computation and Communication Technology (Australian Research Council),\\ Centre for Quantum Dynamics, Griffith University, Yuggera Country, Brisbane, Queensland 4111, Australia}
\address{$^2$ School of Physical Sciences, University of Chinese Academy of Sciences, Yuquan Road 19A, Beijing 10049, China}

\ead{songqiucheng19@gmail.com, dr.travis.j.baker@gmail.com, prof.howard.wiseman@gmail.com}
\date{\today}
\vspace{10pt}

\begin{abstract}
As originally introduced, the EPR phenomenon was the ability of one party (Alice) to steer, by her choice between two measurement settings, the quantum system of another party (Bob) into two distinct ensembles of pure states. As later formalized as a quantum information task, EPR-steering can be shown even when the distinct ensembles comprise mixed states, provided they are pure enough and different enough. Consider the scenario where Alice and Bob each have a qubit and Alice performs dichotomic projective measurements. In this case, the states in the ensembles to which she can steer form the surface of an ellipsoid ${\cal E}$ in Bob's Bloch ball. Further, let the steering ellipsoid ${\cal E}$ have nonzero volume (as it must if the qubits are entangled). It has previously been shown that if Alice's first measurement setting yields an ensemble comprising {\em two} pure states, then this, plus any one other measurement setting, will demonstrate EPR-steering.  Here we consider what one can say if the ensemble from Alice's first setting contains 
only {\em one} pure state $\mathsf{p}\in{\cal E}$, occurring with probability $p_\mathsf{p}$. Using projective geometry, we derive the necessary and sufficient condition analytically for Alice to be able to demonstrate EPR-steering of Bob's state using this and some second setting, when the two ensembles from these lie in a given plane. Based on this, we show that, for a given ${\cal E}$, if $p_\mathsf{p}$ is high enough [$p_{\sf p} > p_{\rm max}^{{\cal E}} \in [0,1)$] then {\em any} distinct second setting by Alice is sufficient to demonstrate EPR-steering.  Similarly, we derive a $p_{\rm min}^{{\cal E}}$ such that $p_\mathsf{p}>p_{\rm min}^{{\cal E}}$ is necessary for Alice to demonstrate EPR-steering using only the first setting and {\em some} other setting. Moreover, the expressions we derive are tight; for spherical steering ellipsoids, the bounds coincide: $p_{\rm max}^{{\cal E}} = p_{\rm min}^{{\cal E}}$.  
\end{abstract}

\section{Introduction}

%general
In 1935, Einstein, Podolsky and Rosen (EPR) first pointed out the seemingly non-local behaviour exhibited by two entangled quantum mechanical systems~\cite{EPR35}. 
In response, Schrödinger introduced the terms \emph{entanglement} and \emph{steering} to describe how Alice can influence Bob’s particle from a distance through her choice of measurement~\cite{Sch35}. 
We now know that the phenomenon of steering described by EPR is a distinct kind of non-locality,  strictly between entanglement and Bell non-locality~\cite{Wis07, Jon07}. 
If the assemblage of all Bob's ensembles that could be remotely generated by Alice can be simulated by a local-hidden-state (LHS) model, then the state shared by Alice and Bob is said to be non-steerable from Alice to Bob.
If such a model does not exist, the state is steerable~\cite{Wis07, Jon07}. 

%steering ellipsoids
The simplest quantum steering scenario is where Alice and Bob share a pair of qubits. 
Conditioned on Alice's measurement and outcome, each ensemble prepared for Bob contains qubit states, which can be represented as points in the Bloch Ball. 
The set of all qubit states that Bob can be steered to, by permitting Alice to perform all possible measurements on her qubit, form an ellipsoid inside his Bloch ball. 
This geometric object is called the quantum steering ellipsoid~\cite{Jev14}, and is determined by the two-qubit state they share. 
Specifically, the surface of the steering ellipsoid can be obtained from projective measurements, and interior points from general measurements described by positive operator-valued-measures (POVMs)~\cite{Jev14}. 
It is worth noting that not every ellipsoid inside the Bloch
sphere corresponds to a physical two-qubit state. 
This has been analyzed by Milne {\em et al}.~\cite{Mil14}, who derived necessary and sufficient conditions for physicality. 
It is known that a geometric characterization of the quantum steering ellipsoid, together with knowledge of Alice and Bob's Bloch vectors, allows one to faithfully reconstruct any two-qubit state~\cite{Jev14}. 
Independent of the concept of EPR-steering, the steering ellipsoid formalism has many interesting connections to other fields of quantum information theory---for example, its volume is a nonlinear entanglement criterion~\cite{Jev14}. 

% link between the SE's and EPR-steering (needs to be clear)
It is important to maintain a clear distinction between the steering ellipsoid formalism and \emph{EPR}-steering.
The latter requires determining the existence (or not) of a LHS model for the ensembles of quantum states Bob is steered to by Alice.
When Bob's system is a qubit, the geometry of these steered states is conveniently captured by the steering ellipsoid formalism. 
However, the mere existence of a steering ellipsoid does not necessarily convey information about steerability, since classically correlated (non-entangled, and therefore non-steerable) quantum states can correspond to steering ellipsoids with non-zero volume. 
In general, there is no simple way to detect EPR-steerability from only steering ellipsoids.
One example where such a link was made was in~\cite{Jev15}, where a necessary condition for steering by Alice, when she is permitted to perform all projective measurements, was derived for T-states, which comprise all mixtures of the four two-qubit Bell states.
This condition was derived from  
the ansatz that Bob's distribution of LHSs could be proportional to the fourth power of the distance of the steering ellipsoid surface from the origin. 
This condition was later shown also to be sufficient~\cite{Ngu16a}. 
In this paper, we make another connection between EPR-steering and the steering ellipsoid for a very different class of two-qubit states---those which permit exactly one of Bob's steered states to be pure.

%EPR-steering background
EPR-steering has been studied extensively for two-qubit states (see, e.g., Refs. ~\cite{Wis07,Jon07,Rei09,Uol20,Jev14,Jev15,Ngu16a,Mil14,McC17,Bow16,Ngu16,Shuming16,Yu18,Yubaichu18,Bak18,Bak20,qno02}). 
 The earliest approaches to detecting steering were based on using steering inequalities~\cite{Cav09}, which are analogous to Bell inequalities, and proved useful in experimental demonstrations with entangled qubit pairs~\cite{Sau10}. Alternatively, 
given a finite set of steered ensembles for Bob---called an \emph{assemblage}~\cite{Pus13}---semi-definite programs can be used to determine steerability numerically~\cite{Skrzypczyk14,Hir16,Fil18,Cav16a}; see~\cite{Cav16} for a review.
Another relevant approach, which is applicable to infinitely large assemblages, was developed in Ref.~\cite{Ngu19}.
There, by utilizing the geometry of the two-qubit steering scenario, and bounding the Bloch sphere inside and outside by discrete meshes of points, it was shown that the steerability of almost any two-qubit state can be numerically determined to high accuracy by executing a linear program. 
However, exact necessary and sufficient criteria for steering under all projective measurements are known only for the class of T-states, as mentioned above. 
Finally, of particular relevance to the current paper, for the simple case of two dichotomic measurements, there exist some analytical criteria to characterize EPR-steering under various assumptions on the class of states and measurements; see~\cite{cavalcanti2015analog, PhysRevA.94.032317,PhysRevA.95.062111,quan2016steering,chen2017quantum}.

%pure states
As our title suggests, here we are interested in assemblages that contain at least one pure state.
In Ref.~\cite{Ngu17}, it was discovered that pure steered states often carry interesting information about the shared bipartite state. 
In particular, if one of Alice's projective measurements can steer Bob's qubit to two pure states, then the bipartite system they share is either separable or EPR-steerable.
For qubits, if these states are identical, then the two-qubit state shared by Alice and Bob is a direct-product state.
If these pure states are different, then the two-qubit state is steerable by performing that measurement, and any other measurement.
These results were a concise generalization of so-called ``all-versus-nothing proofs'' of steering~\cite{Che13, sun14}. 
These results generalize naturally to high dimensional systems.
If Alice can steer Bob's system to a set of independent pure states, then the bipartite state shared by Alice and Bob is either separable or steerable~\cite{Ngu17}.

%one pure state
In this paper we present strong results for the case where Alice can perform two dichotomic projective measurements on her qubit,  and can steer Bob's qubit to exactly {\em one} pure state with some non-zero probability.
This situation can arise with any two qubit state where Bob's steering ellipsoid touches his Bloch sphere.
We find that EPR-steering is determined completely by the location of Bob's reduced state, and the geometry of his steering ellipsoid. 
Remarkably, we find interesting relationships between LHS models, the steering ellipsoid, and projective geometry.
Using the latter, we derive theorems which provide practical and simple criteria for steering, which are both necessary and sufficient. We illustrate the elegant nature of our criteria by explicitly constructing the projective transformations which certify steerability/non-steerability, for three sets of entangled states, some of which have been studied in the entanglement literature.
These are tangent X-states, canonical obese states, and tangent sphere states. Using the same tools, we also show that if the probability of Alice steering Bob’s qubit to a pure state is above a certain threshold, which can be readily computed given his steering ellipsoid, then Alice can always demonstrate EPR-steering by two dichotomic measurements.

%outline
This paper is structured as follows.
In Sec.~\ref{sec:one_pure_assemblage} we introduce preliminaries and notations for EPR-steering.
We discuss assemblages in the simplest steering scenario, introducing the term quadrivial assemblage, and a special case of these which contain exactly pure state. 
We finish by deriving conditions for steerability of assemblages of the latter kind.
In Sec.~\ref{sec:Scenario} we introduce two-qubit states represented in terms of the Pauli matrices, and state conditions for one of the steered states to be pure. 
We also rigorously define the scenario we consider in this paper.
In Sec.~\ref{sec:projective_geometry} we introduce relevant background about projective geometry and derive an interesting geometric result, using projective transformations.
We then use this fact to formulate new EPR-steering criteria for all those two-qubit states which permit one pure steered state. 
In Sec.~\ref{sec:examples} we use our main theorems to analytically derive the necessary and sufficient EPR-steering criteria for tangent X-states, canonical obese states and tangent sphere states in the Scenario we defined.  
In Sec.~\ref{sec:one_pure_state_power}, we give a necessary criterion and a sufficient criterion for Alice to demonstrate EPR-steering in our Scenario,  which requires only knowledge of Bob's ellipsoid and the probability that his qubit can be steered to a pure state. 
 Using the main Theorem we derive, we then apply this result to two examples, which are tangent X-states and tangent spheroid states. 
We conclude in Sec.~\ref{sec:Conclusion}, and discuss open questions.

\section{EPR-steering}
\label{sec:one_pure_assemblage}

In this section, first we introduce the concepts of LHS model and EPR-steering. 
Then we give the definition of quadrivial  assemblage and a lemma about a quadrivial assemblage admitting a LHS model. 
Finally, we give the definition of one pure quadrivial assemblage and the necessary and sufficient for one pure quadrivial assemblage to admit a LHS model. In this section, we do not require Alice's system to be a qubit.

\subsection{Ensembles and local-hidden-state models}

Consider a bipartite state $\rho_\mathrm{AB}$ shared by Alice and Bob. 
Suppose Alice can choose to perform different measurements on her subsystem, labelled by $x$, which gives outcomes labelled by $r$.
Generally, the effects of this type of measurement are described by elements of a POVM $E_{r \mid x}$, which satisfy the relations $\sum_{r} E_{r \mid x}=I\ \forall x$ and $E_{r \mid x} \geqslant 0 \ \forall r, x$. 
Conditioned on each measurement setting and outcome that Alice produces, the state of Bob’s system is steered into the state
\begin{equation}
\rho_{r \mid x}=\frac{\operatorname{Tr}_{\mathrm{A}}\left[\left(E_{r \mid x} \otimes {I}\right) \rho_{\mathrm{AB}}\right]}{\operatorname{Tr}\left[\left(E_{r \mid x} \otimes {I}\right) \rho_{\mathrm{AB}}\right]}
\end{equation}
with probability $p_{r|x}=\operatorname{Tr}\left[\left(E_{r \mid x} \otimes {I}\right) \rho_{\mathrm{AB}}\right]$. 
The information available to Bob is the conditional states and corresponding probabilities for all measurements and outcomes on Alice's side, which are ensembles of quantum states $\{\sigma_{r|x}\}_{r,x}$, where $\sigma_{r|x}=p_{r|x}\rho_{r|x}$.
The set of all ensembles arising in a steering scenario is often called an {\em assemblage}~\cite{Pus13}. 
 An important property of assemblages is that they are non-signalling,
\begin{equation}
\sum\limits_{r} p_{r|x}\rho_{r|x} = \sum\limits_{r} p_{r|x'}\rho_{r|x'},
\end{equation}
for all $x,x'$.
This ensures that Bob's marginal state is independent of Alice's measurement choice.
Importantly, the assemblage contains all relevant information for a steering test.
We say that the assemblage $\{\sigma_{r|x}\}_{r,x}$ is non-steerable if and only if it can be reproduced by a coarse graining over local quantum states on Bob's side~\cite{Wis07}.
That is, if and only if, for all $r$ and $x$, it admits a decomposition
\begin{align}\label{LHS}
p_{r|x} \rho_{r|x} = \sum_{\lambda}  D(r|x, \lambda) p_\lambda \rho_{\lambda},
\end{align}
where $\{ p_\lambda, \rho_\lambda \}$ is the ensemble of LHSs, and $D(r|x, \lambda)$ is a set of probability distributions labelled by $\lambda$ which map measurement settings to outcomes $a$ deterministically~\cite{Pus13}. 
Conversely, an assemblage demonstrates steering if and only if it does not admit a decomposition as in Eq.~\eqref{LHS}.

\subsection{Simplest steering: the quadrivial assemblage}

In order to study the power of assemblages containing one (normalized) pure state, we first focus on the simplest family of assemblages capable of demonstrating steering~\cite{Sau12}. 
This is the family of assemblages arising from two measurements, with two outcomes.
We refer to this as a quadrivial assemblage. 
The name reflects the fact that there are four states in the assemblage, and that this is the smallest non-trivial assemblage, in the sense that it has the potential to demonstrate EPR-steering. 

\begin{definition}[Quadrivial assemblage]
Alice can perform two dichotomic measurements on her local system to steer Bob's system into two distinct ensembles of states, $\{p_{\pm|x}, \rho_{\pm | x} \}_x$ with $x=0,1$. 
We call the assemblage $\{\sigma_{r|x}\}_{r,x}$, where $\sigma_{r|x}=p_{r|x}\rho_{r|x}$, a quadrivial assemblage.
\end{definition}
From Eq.~\eqref{LHS}, the quadrivial  assemblage admits a LHS model if and only if there exists a valid solution to the system of equations
\begin{align}
&p_{+|0} \rho_{+ | 0} = p_0\rho_{0} + p_1\rho_{1}, \label{lhsq1}\\
&p_{-|0} \rho_{- | 0} = p_2\rho_{2} + p_3\rho_{3},\label{lhsq2}\\
&p_{+|1} \rho_{+|1} = p_0\rho_{0} + p_2\rho_{2},\label{lhsq3}\\
&p_{-|1} \rho_{-|1} = p_1\rho_{1} + p_3\rho_{3}. \label{lhsq4}
\end{align}
Now we will focus again on the simplest case, which is where Bob's system is of dimension two.
This means that his steered states will be bounded operators on $\mathcal{H}^2$ with unit trace.
These can be represented in the Bloch ball $\mathcal{B}$, wherein a point on its surface (the Bloch sphere $\mathcal{S}$) represents a pure qubit state, and points interior correspond to mixed states~\cite{Nie02}.
We will see that determining the steerability of qubit-quadrivial assemblages can be determined elegantly from their geometric properties.
For this purpose, we will represent quantum states appearing in qubit assemblages by points inside the Bloch ball, without reference to an origin. 
We use $\mathsf{s}_{r|x}$ (resp. $\mathsf{s}_{\lambda}$) to denote the points representing the quantum states of the same label, $\rho_{r|x}$ ($\rho_{\lambda}$).
From a geometrical point of view, mixtures of two quantum states correspond to all points on a straight line connecting the states inside the Bloch ball $\mathcal{B}$~\cite{Ben06}.
For example, Eq.~\eqref{lhsq1} shows that two states $\rho_0$ and $\rho_1$ are mixed with probabilities $p_0/p_{+|0}$ and $p_1/p_{+|0}$ respectively, averaging to the state $\rho_{+\mid 0}$. 
This means that the point $\mathsf{s_{+\mid 0}}$, which corresponds to $\rho_{+\mid 0}$, must be on the segment of the straight line between the points $\mathsf{s_{0}}$ (corresponding to $\rho_0$) and $\mathsf{s}_{1}$ (corresponding to state $\rho_1$) in Bob's Bloch ball. 
The constraints that a LHS ensemble must satisfy can be straightforwardly expressed in terms of constraints on probabilities and geometric points by
\begin{align}
p_{r|x} &= \sum_{\lambda}  D(r|x, \lambda) p_\lambda \label{eq:points_probabilities} \\
p_{r|x} \mathsf{s}_{r|x} &= \sum_{\lambda}  D(r|x, \lambda) p_\lambda \mathsf{s}_{\lambda}, \label{eq:points}
\end{align}
for all $r,x$.
Likewise, the no-signalling condition becomes,
\begin{equation}
\sum\limits_{r} p_{r|x}\mathsf{s}_{r|x} = \mathsf{b}\quad\forall x. \label{eq:NSpoints}
\end{equation}
We note that the necessary and sufficient condition for the quadrivial assemblage to exhibit steering can also be derived from a steering inequality~\cite{PhysRevA.95.062111,PhysRevLett.115.230402} derived in Ref.~\cite{PhysRevA.81.062116}.

\subsection{One pure quadrivial assemblage}

Now, we make a restriction to consider only quadrivial assemblages that contain exactly one pure steered state. 
We will refer to these as 1PQAs. 
\begin{definition}[1-pure quadrivial assemblage (1PQA)]
\label{definition2}
 A 1-pure quadrivial assemblage (1PQA) is a quadrivial assemblage that contains 
 exactly one pure steered state with nonzero probability.
\end{definition}

In what follows, we continue to restrict Bob's system to be a qubit.
Note that we do not yet make any assumptions about Alice's system. 
If the steered states for a 1PQA are not collinear, then the steered states must lie in a unique plane ${P}_\theta$ which intersects the Bloch ball $\mathcal{B}$.
One such plane is depicted in Fig.~\ref{triangle}, where the circle $\mathsf{C}_\theta$ is the boundary of a disk formed by the intersection of $\mathcal{B}$ and ${P}_\theta$.
There, the steered states (green points) are reproduced by averaging over members of the LHS ensemble, shown as the black points which form the vertices of a triangle.
Interestingly, the existence of the pure state in 1PQAs implies that steerability is determined entirely by whether one such triangle exists inside $\mathsf{C}_\theta$.
This is expressed in the following lemma.

\begin{figure}[h]\centering
\includegraphics[angle=0,width=0.42\linewidth]{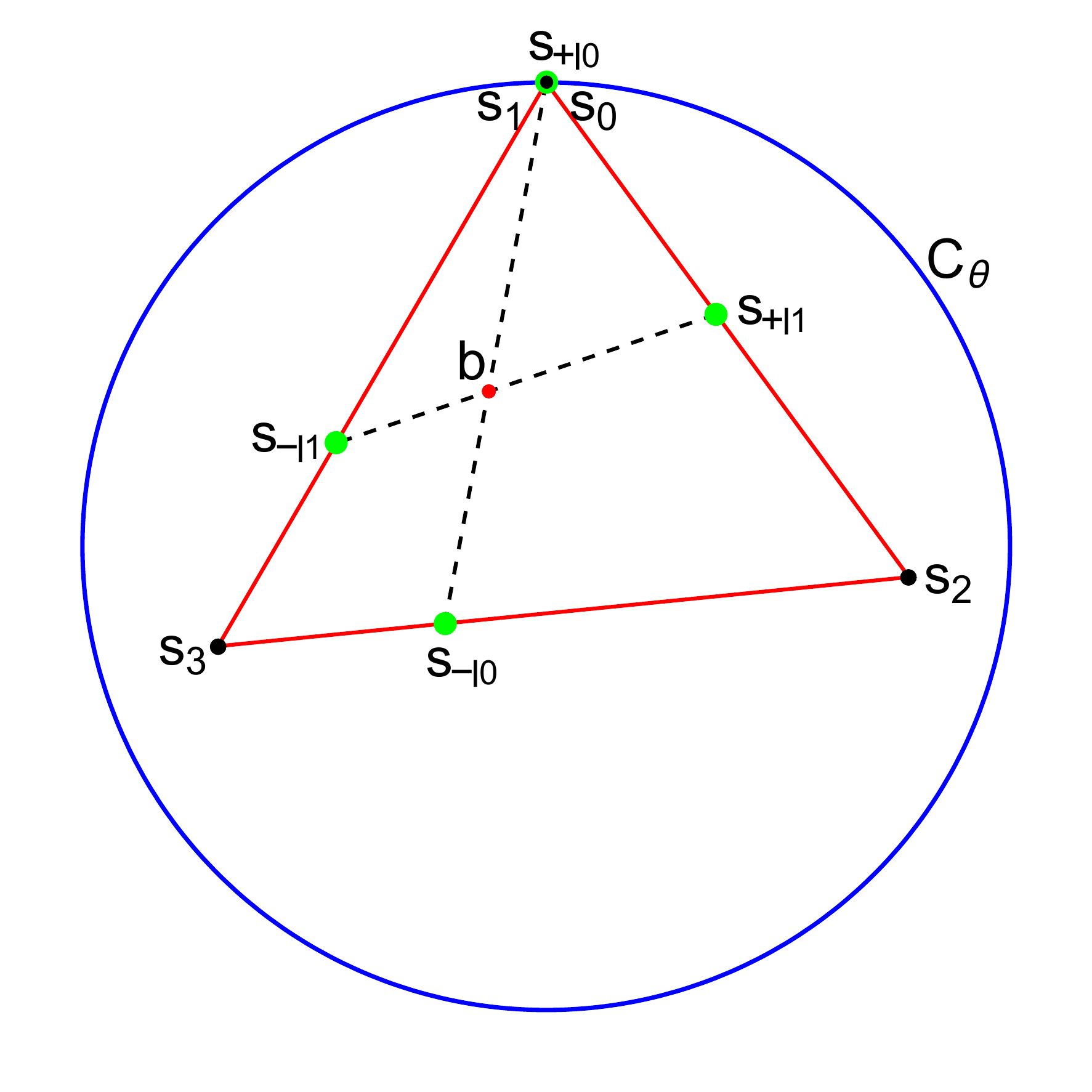}
\caption{Depiction of a generic 1PQA which admits an LHS model in a plane intersecting the Bloch ball.
The red point $\mathsf{b}$ represents Bob's reduced state, and the four green points  $\mathsf{s}_{\pm|x}$ are the normalized members of the 1PQA. 
The LHS ensemble is composed of four quantum states at points $\mathsf{s}_{i}$.
} 
\label{triangle} 
\end{figure}
\begin{lemma}\label{lemma2}
 A qubit-1PQA admits a LHS model if and only if there exists a triangle in the Bloch ball that contains all the steered states on its boundary.
\end{lemma}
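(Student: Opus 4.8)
The plan is to translate the LHS conditions (\ref{lhsq1})--(\ref{lhsq4}) into the geometric language of (\ref{eq:points_probabilities})--(\ref{eq:points}) and to exploit the presence of a single extreme point. Relabelling so that the pure steered state is $\mathsf{s}_{+|0}=\mathsf{p}$, the crux is the elementary observation that a point on the Bloch sphere is an \emph{extreme} point of the convex Bloch ball, so it cannot be written as a nontrivial convex combination of ball points unless they all coincide with it. Applying this to (\ref{lhsq1}), which expresses $\mathsf{p}$ as the average of $\mathsf{s}_0$ and $\mathsf{s}_1$ with positive weights $p_0/p_{+|0}$ and $p_1/p_{+|0}$, forces $\mathsf{s}_0=\mathsf{s}_1=\mathsf{p}$. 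Thus any LHS ensemble for a 1PQA collapses from four to at most three distinct points $\mathsf{p},\mathsf{s}_2,\mathsf{s}_3$.

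For the forward direction I would then simply read off the incidences from the remaining equations: (\ref{lhsq2}) places $\mathsf{s}_{-|0}$ on the segment $\mathsf{s}_2\mathsf{s}_3$, while (\ref{lhsq3}) and (\ref{lhsq4}) place $\mathsf{s}_{+|1}$ on $\mathsf{p}\mathsf{s}_2$ and $\mathsf{s}_{-|1}$ on $\mathsf{p}\mathsf{s}_3$. Hence $\mathsf{p},\mathsf{s}_2,\mathsf{s}_3$ are vertices of a triangle lying in $\mathcal{B}$ (the LHS states being legitimate qubit states) whose three edges carry all four steered states; for non-collinear steered states this triangle is nondegenerate. Such a ``structured'' triangle is in particular a triangle containing every steered state on its boundary, which gives one implication immediately.

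For the converse I would first argue that the pure state must be a vertex of any bounding triangle: by strict convexity of the sphere, a sphere point lying on the triangle boundary cannot be interior to an edge, so $\mathsf{p}$ is a vertex. I then aim to upgrade an arbitrary bounding triangle into a structured one. Using the no-signalling condition (\ref{eq:NSpoints}), $\mathsf{b}$ lies on both chords $\mathsf{p}\mathsf{s}_{-|0}$ and $\mathsf{s}_{+|1}\mathsf{s}_{-|1}$, so the ray from $\mathsf{p}$ through $\mathsf{b}$ (which carries $\mathsf{s}_{-|0}$) lies angularly between the rays $\mathsf{p}\mathsf{s}_{+|1}$ and $\mathsf{p}\mathsf{s}_{-|1}$. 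Taking those two latter rays as the edges adjacent to $\mathsf{p}$ and sliding a transversal through $\mathsf{s}_{-|0}$ inward from the circle $\mathsf{C}_\theta$, an intermediate-value argument produces vertices $W_2,W_3$ inside the disk with $\mathsf{s}_{-|0}$ on $W_2W_3$. Setting $\mathsf{s}_0=\mathsf{s}_1=\mathsf{p}$, $\mathsf{s}_2=W_2$, $\mathsf{s}_3=W_3$, I recover the LHS weights as the edge (barycentric) coordinates of the steered states; crucially, the two no-signalling expressions for $\mathsf{b}$ yield identical barycentric coordinates, so the consistency relations needed to solve (\ref{lhsq1})--(\ref{lhsq4}) hold automatically with nonnegative weights summing to one.

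The step I expect to be the main obstacle is exactly this geometric upgrade in the converse: showing that the mere existence of \emph{some} bounding triangle guarantees that a structured triangle fits inside the Bloch ball (equivalently, that $\mathsf{s}_{-|0}$ is not ``too far out'' relative to the rays through $\mathsf{s}_{+|1}$ and $\mathsf{s}_{-|1}$), and verifying that the intermediate-value construction keeps $W_2,W_3$ within the disk. I would also need to dispose of the degenerate cases---collinear steered states, a vanishing LHS weight, or $\mathsf{b}$ landing on the triangle boundary---either by a direct one-dimensional check or by a limiting argument, so that the generic convexity reasoning above covers them.
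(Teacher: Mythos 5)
Your overall strategy coincides with the paper's: the extremality argument forcing $\mathsf{s}_0=\mathsf{s}_1=\mathsf{p}$, the reading-off of incidences for the ``LHS $\Rightarrow$ triangle'' direction, and the verification of consistency in the converse via the uniqueness of the normalized barycentric representation of $\mathsf{b}$ with respect to $\triangle\mathsf{p}\mathsf{s}_2\mathsf{s}_3$ are exactly the steps of the paper's proof. The one place where you diverge is the step you yourself flag as the main obstacle, and as written it is a genuine gap: in the converse you posit that an arbitrary bounding triangle must be ``upgraded'' to a structured one by sliding a transversal through $\mathsf{s}_{-|0}$, but you neither show that this sliding family actually reaches such a transversal with endpoints still inside the disk, nor do you identify where the hypothesis (existence of \emph{some} bounding triangle) enters to guarantee it. Without that, the converse is not proved.

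The paper closes this by arguing that no upgrade is needed: because the three mixed steered states are not collinear and are tied to $\mathsf{b}$ by the no-signalling constraints, any triangle in the ball containing all four steered states on its boundary has $\mathsf{p}$ as a vertex and exactly one of $\mathsf{s}_{-|0},\mathsf{s}_{+|1},\mathsf{s}_{-|1}$ on each edge, with $\mathsf{s}_{-|0}$ on the edge opposite $\mathsf{p}$; the given triangle is therefore already structured, and the weights in Eqs.~\eqref{lhsp0}--\eqref{lhsp3} are read off from it directly. If you prefer to retain your repair route, the missing ingredient is the convexity of the given triangle $T$: the rays from $\mathsf{p}$ through $\mathsf{s}_{+|1}$ and $\mathsf{s}_{-|1}$ exit $T$ at points $\mathsf{q}_1,\mathsf{q}_2$ of its far edge, so $\triangle\mathsf{p}\mathsf{q}_1\mathsf{q}_2\subseteq T$ contains $\mathsf{s}_{-|0}$ (which lies on the ray through $\mathsf{b}$, angularly between the other two rays, and beyond the chord $\mathsf{s}_{+|1}\mathsf{s}_{-|1}$); shrinking $\mathsf{q}_1,\mathsf{q}_2$ back toward $\mathsf{s}_{+|1},\mathsf{s}_{-|1}$ sweeps the far edge continuously from one side of $\mathsf{s}_{-|0}$ to the other while every intermediate triangle stays inside $T$, hence inside the Bloch ball, which is the intermediate-value statement you needed. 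Either supplement completes your argument; as submitted, the converse stops exactly where the work is.
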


\begin{proof}
By suitable labelling of the assemblage, we can make $\rho_{+|0}$ the pure state which is required in the definition of the 1PQA. 
Since pure quantum states are extremal points in the set of quantum states, they cannot be expressed as a mixture of two \emph{different} states.

To prove the \emph{if} part, we are required to show that any triangle which satisfies the requirements of the lemma corresponds to a valid LHS model for the 1PQA.
Since $\mathsf{s}_{+|0}$ is located on the boundary $\mathsf{C}_\theta$, any triangle contained in the Bloch Ball must have one of its vertices located at the same point.
For the corresponding LHS model, we make the choice $\mathsf{s}_{0}=\mathsf{s}_{1}=\mathsf{s}_{+\mid 0}$.
Now, since the remaining three steered states are not collinear, any triangle satisfying the conditions of the lemma must also have exactly one of $\mathsf{s}_{-\mid 0}, \mathsf{s}_{+\mid 1}, \mathsf{s}_{-\mid 1}$ along each of its edges.
Any triangle with this property can be described by weights  $\epsilon_{-|0} \in \left(0,1\right)$ and $\epsilon_{\pm |1} \in \left[0,1\right)$ such that
\begin{align}
\mathsf{s}_{- \mid 0} &= \epsilon_{-|0} \mathsf{s}_{2}+(1-\epsilon_{-|0})\mathsf{s}_{3}, \label{tria1}\\
\mathsf{s}_{+ \mid 1} &= \epsilon_{+ |1} \mathsf{s}_{0}+(1-\epsilon_{+ |1})\mathsf{s}_{2}, \label{tria2}\\
\mathsf{s}_{- \mid 1} &= \epsilon_{-|1} \mathsf{s}_{1}+(1-\epsilon_{-|1})\mathsf{s}_{3}. \label{tria3}
\end{align}

In other words, $\mathsf{s}_{2}$ and $\mathsf{s}_{3}$ are the other two vertices of the triangle $\triangle \mathsf{s}_0 \mathsf{s}_{2} \mathsf{s}_{3}\equiv \triangle \mathsf{s}_1 \mathsf{s}_{2} \mathsf{s}_{3}$.
Now for any such arrangement of the $\mathsf{s}_i$'s,  the probabilities with which each LHS appears are uniquely determined by the constraints on the ensemble which does \emph{not} contain the pure state.
That is, for $x=1$, Eqs.~\eqref{eq:points_probabilities} and \eqref{eq:points} require that
\begin{align}
p_0 &= p_{+|1}\epsilon_{+ |1},\label{lhsp0} \\
p_1 &= p_{-|1}\epsilon_{-|1}, \label{lhsp1} \\
p_2 &= p_{+|1}(1-\epsilon_{+ |1}),\label{lhsp2} \\
p_3 &= p_{-|1}(1-\epsilon_{-|1}). \label{lhsp3}
\end{align}
The probabilities in Eqs.~\eqref{lhsp0}-\eqref{lhsp3} and the positions of the $\mathsf{s}_i$'s entirely define a LHS model, characteristic to the triangle they define.
It remains to check that the $x=0$ ensemble is reproduced by it.
The mixed steered state in this ensemble is required to satisfy
\begin{equation}
p_{-|0} \mathsf{s}_{-|0} = p_2 \mathsf{s}_{2} + p_3 \mathsf{s}_{3},
\end{equation}
which can be equivalently expressed in terms of the triangle vertices as
\begin{equation}
p_{-|0} (\epsilon_{-|0} \mathsf{s}_{2}+(1-\epsilon_{-|0})\mathsf{s}_{3}) = p_2 \mathsf{s}_{2} + p_3 \mathsf{s}_{3}. \label{eq:mixed_state_vertex}
\end{equation}
To verify this holds, we use the no-signalling condition in Eq.~\eqref{eq:NSpoints} twice; once for each measurement.
That is, $\mathsf{b} = p_{+|0} \mathsf{s}_{+|0} + p_{-|0} \mathsf{s}_{-|0} = p_{+|1} \mathsf{s}_{+|1} + p_{-|1} \mathsf{s}_{-|1}$.
Expressing these in terms of the triangle vertices, we have
\begin{align}
\mathsf{b}
&= p_{+|0}\mathsf{s}_0 + p_{-|0} \epsilon_{- |0} \mathsf{s}_2 + p_{-|0}(1-\epsilon_{-|0}) \mathsf{s}_3 ,\label{bc1} \\
&=(p_{+|1}\epsilon_{+ |1}+p_{-|1}\epsilon_{-|1}) \mathsf{s}_0 + p_{+|1}(1-\epsilon_{+ |1})\mathsf{s}_2 +p_{-|1}(1-\epsilon_{-|1})\mathsf{s}_3.
\end{align}
These two representations are equivalent, since they are both normalized barycentric coordinate representations of $\mathsf{b}$ with respect to the triangle $\triangle \mathsf{s}_0 \mathsf{s}_{2} \mathsf{s}_{3}$, and such a representation is unique~\cite{ungar10}.
Therefore, by matching terms, we can deduce that
\begin{align}
& p_{+|0}=p_{+|1}\epsilon_{+ |1}+p_{-|1}\epsilon_{-|1},\\
& p_{+|1}(1-\epsilon_{+ |1}) = p_{-|0} \epsilon_{- |0} \equiv p_2, \label{lhsp22} \\
& p_{-|1}(1-\epsilon_{-|1}) = p_{-|0}(1-\epsilon_{-|0}) \equiv p_3,\label{lhsp33}
\end{align}
which means that Eq.~\eqref{eq:mixed_state_vertex} is satisfied.
From these three equations, we can also verify the states in the $x=0$ ensemble appear with the right probabilities,  $p_{+|0} = p_0 + p_1$ and $p_{-|0} = p_2 + p_3$.
This proves the \emph{if} part, since we have shown that any triangle satisfying the requirements of the lemma corresponds to a valid LHS model for a 1PQA.

The \emph{only if} part of the statement is straightforward, since if there is no triangle in the Bloch ball, then at least one of the points $\mathsf{s}_i$ required by \eqref{eq:points} to reproduce the steered ensembles must be outside it, corresponding to a non-quantum state.
\end{proof}

While Lemma~\ref{lemma2} permits a purely geometric characterization of steerability, it requires checking the existence for \emph{all} possible triangles passing through the steered states. 
The next observation allows a simpler characterization by considering only a single triangle.

\begin{figure}[h]\centering
\includegraphics[angle=0,width=0.42\linewidth]{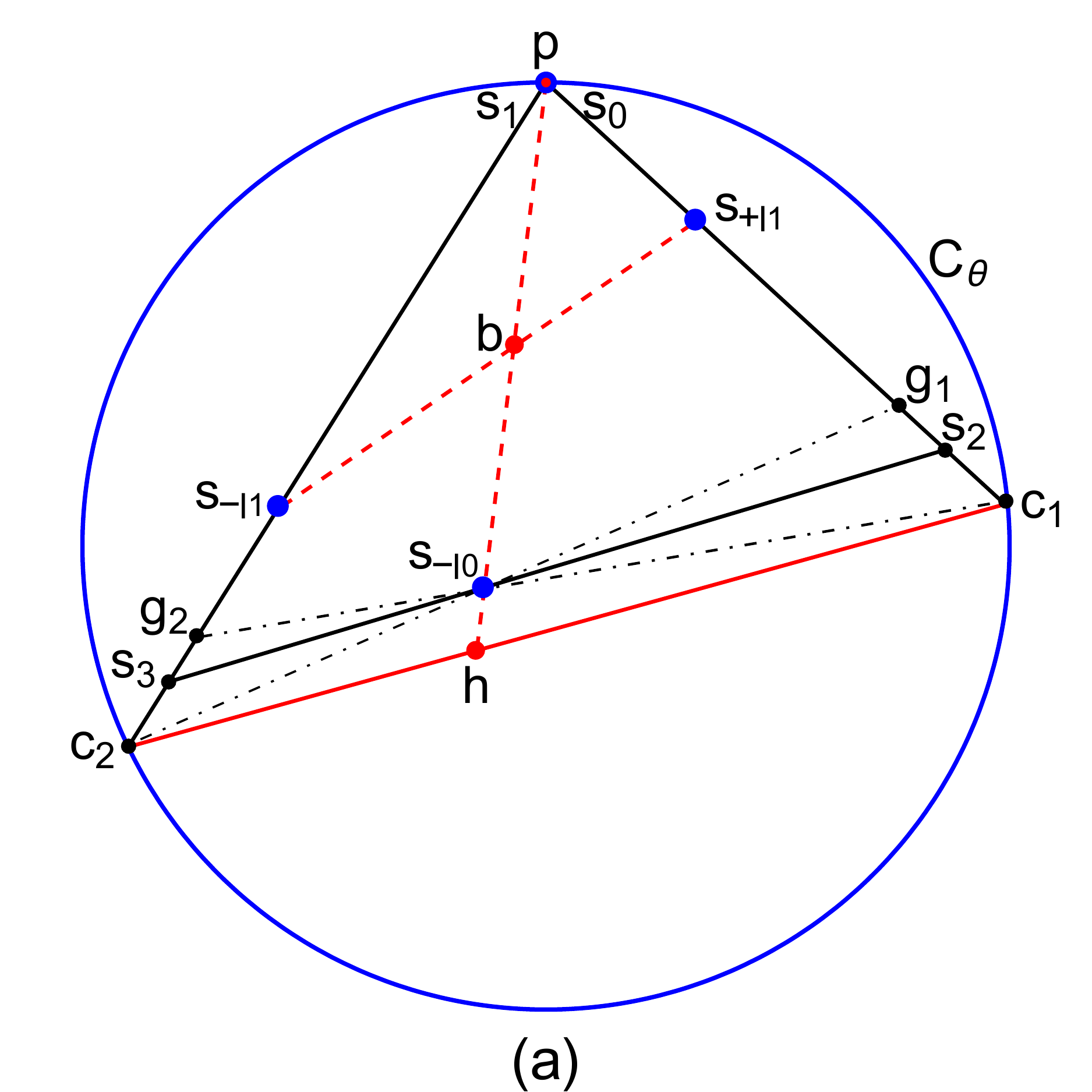}
\hspace{7pt}
\includegraphics[angle=0,width=0.42\linewidth]{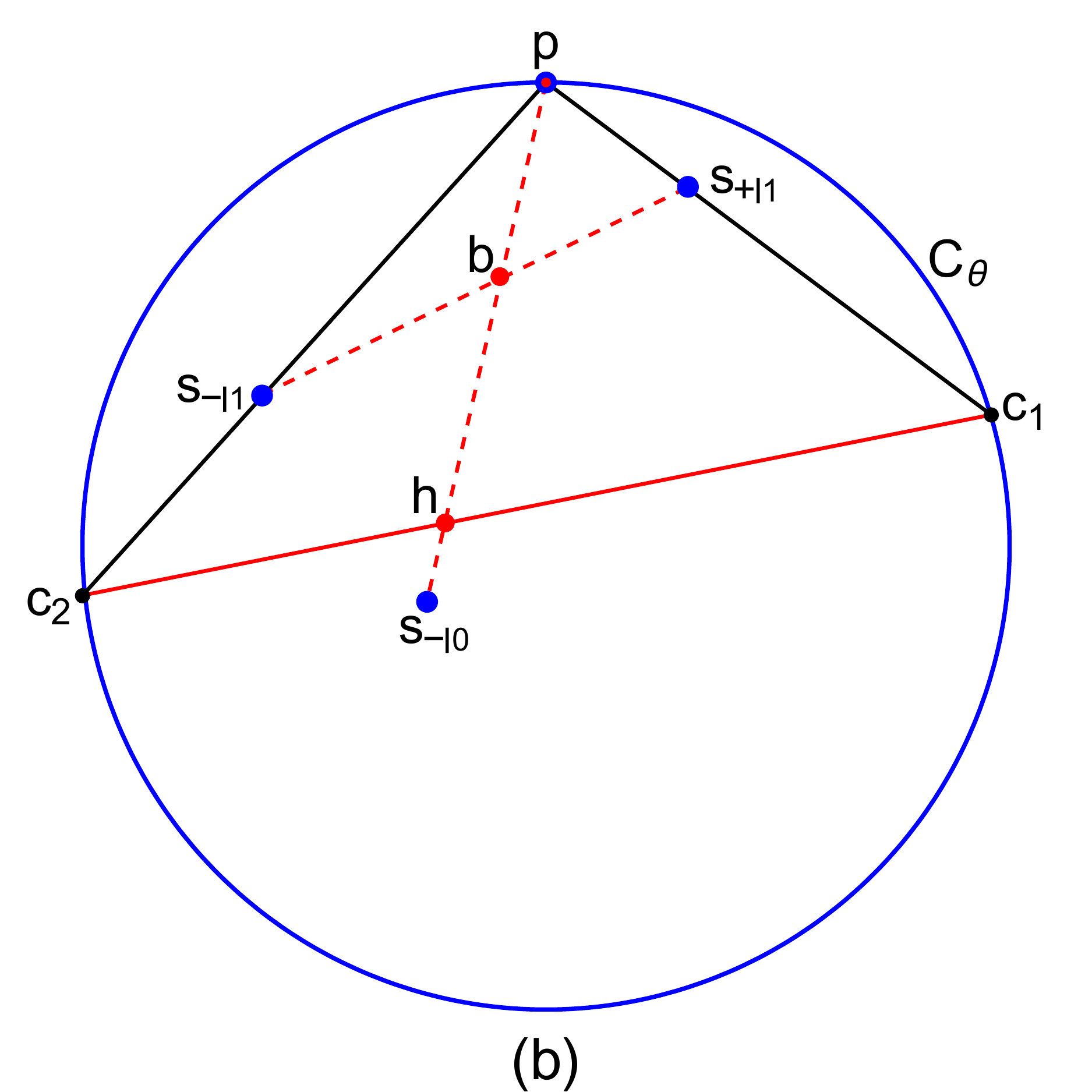}
\caption{Geometry relevant to Corollary \ref{corollary}. 
Bob's steered ensembles consist of four steered states represented by  $\mathsf{s}_{\pm|x}$.
The blue circle $\mathsf{C_\theta}$ is the cross section between the plane defined by these steered states and the Bloch sphere.
The point $\mathsf{p}$ represents the pure state, and the red point $\mathsf{b}$ represents Bob's reduced state. 
Corollary \ref{corollary} shows that us that a qubit-1PQA demonstrates EPR-steering if and only if $\mathsf{s_{-|0}}$ is outside the triangle $\triangle \mathsf{pc_{1}c_{2}}$. 
(a) A non-steering case. 
The LHS $\mathsf{s_2}$ (resp. $\mathsf{s_3}$) can be located at any point on the line segment $\mathsf{g_1c_1}~(\mathsf{g_2c_2})$ to form $\triangle\mathsf{ps_2s_3}$, which satisfies Lemma~\ref{lemma2}. 
(b) A steering case.
\label{geo}
}
\end{figure}

\begin{corollary}\label{corollary}
Consider a qubit-1PQA contained in the plane ${P}_\theta$.
Define $\mathsf{c}_1$ and $\mathsf{c}_2$ as the intersection of $\mathsf{C}_\theta = {P}_\theta \cap \mathcal{S}$ with the extension of the lines through the pure steered state $\mathsf{p}$ and the two states from the \emph{other} steered ensemble.  
The 1PQA is non-steerable if and only if the mixed state in the same ensemble as the pure state is contained within the convex hull of $\triangle \mathsf{p}\mathsf{c}_1 \mathsf{c}_2$. 
\end{corollary}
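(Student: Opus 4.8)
The plan is to convert the existential criterion of Lemma~\ref{lemma2} --- the existence of \emph{some} admissible triangle --- into a single containment test. First I would recall the rigid structure that any admissible triangle must have, as already extracted in the proof of Lemma~\ref{lemma2}: with $\mathsf{p}=\mathsf{s}_{+|0}$ the pure state, one vertex of the triangle sits at $\mathsf{p}$, while the three mixed steered states distribute themselves one per edge, namely $\mathsf{s}_{+|1}$ on $\mathsf{p}\mathsf{s}_2$, $\mathsf{s}_{-|1}$ on $\mathsf{p}\mathsf{s}_3$, and $\mathsf{s}_{-|0}$ on the base $\mathsf{s}_2\mathsf{s}_3$. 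Thus an admissible triangle is entirely determined by the two free vertices $\mathsf{s}_2,\mathsf{s}_3$, subject only to staying inside the disk bounded by $\mathsf{C}_\theta$.

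Next I would pin down the feasible locus of these vertices. Because $\mathsf{s}_{+|1}$ must lie between $\mathsf{p}$ and $\mathsf{s}_2$, the vertex $\mathsf{s}_2$ is forced onto the ray from $\mathsf{p}$ through $\mathsf{s}_{+|1}$, beyond $\mathsf{s}_{+|1}$; containment in the disk then confines it to the chord from $\mathsf{s}_{+|1}$ to the second intersection of that ray with $\mathsf{C}_\theta$, which is precisely the point $\mathsf{c}_1$ of the statement. Symmetrically $\mathsf{s}_3$ ranges over $[\mathsf{s}_{-|1},\mathsf{c}_2]$. An admissible triangle therefore exists if and only if $\mathsf{s}_{-|0}$ lies on some segment $[\mathsf{s}_2,\mathsf{s}_3]$ with $\mathsf{s}_2\in[\mathsf{s}_{+|1},\mathsf{c}_1]$ and $\mathsf{s}_3\in[\mathsf{s}_{-|1},\mathsf{c}_2]$. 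Since $[\mathsf{s}_{+|1},\mathsf{c}_1]$ and $[\mathsf{s}_{-|1},\mathsf{c}_2]$ are opposite edges of the convex quadrilateral $\mathsf{s}_{+|1}\mathsf{c}_1\mathsf{c}_2\mathsf{s}_{-|1}$, the union of all such segments is exactly that quadrilateral, so the existence question reduces to the membership $\mathsf{s}_{-|0}\in\mathsf{s}_{+|1}\mathsf{c}_1\mathsf{c}_2\mathsf{s}_{-|1}$.

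The remaining step is to replace this quadrilateral with the triangle $\triangle\mathsf{p}\mathsf{c}_1\mathsf{c}_2$ of the corollary. Because $\mathsf{s}_{+|1}\in[\mathsf{p},\mathsf{c}_1]$ and $\mathsf{s}_{-|1}\in[\mathsf{p},\mathsf{c}_2]$, the quadrilateral is exactly $\triangle\mathsf{p}\mathsf{c}_1\mathsf{c}_2$ with the corner triangle $\triangle\mathsf{p}\mathsf{s}_{+|1}\mathsf{s}_{-|1}$ excised, that is, the part of $\triangle\mathsf{p}\mathsf{c}_1\mathsf{c}_2$ lying on the far side of the line $\mathsf{s}_{+|1}\mathsf{s}_{-|1}$ from $\mathsf{p}$. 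Here the no-signalling condition does the essential work: writing $\mathsf{b}=p_{+|0}\mathsf{p}+p_{-|0}\mathsf{s}_{-|0}=p_{+|1}\mathsf{s}_{+|1}+p_{-|1}\mathsf{s}_{-|1}$, the reduced state $\mathsf{b}$ lies simultaneously on the segment $[\mathsf{p},\mathsf{s}_{-|0}]$ and on the segment $[\mathsf{s}_{+|1},\mathsf{s}_{-|1}]$, so the line $\mathsf{s}_{+|1}\mathsf{s}_{-|1}$ strictly separates $\mathsf{p}$ from $\mathsf{s}_{-|0}$. Consequently $\mathsf{s}_{-|0}$ can never fall in the excised corner, and membership in the quadrilateral coincides with membership in $\triangle\mathsf{p}\mathsf{c}_1\mathsf{c}_2$. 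Chaining these equivalences with Lemma~\ref{lemma2} yields non-steerability if and only if $\mathsf{s}_{-|0}\in\triangle\mathsf{p}\mathsf{c}_1\mathsf{c}_2$, as claimed.

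I expect the main obstacle to be this last no-signalling separation argument, since it is what upgrades the ``obvious'' quadrilateral region to the cleaner triangle and is the one place where the non-signalling constraint linking $\mathsf{b}$, the pure state, and the opposite ensemble enters decisively. A secondary point requiring care is the claim that segments joining opposite edges sweep out the entire convex quadrilateral, together with the treatment of degenerate and boundary configurations, such as collinear steered states, vanishing probabilities, or $\mathsf{s}_{-|0}$ landing exactly on the line $\mathsf{s}_{+|1}\mathsf{s}_{-|1}$.
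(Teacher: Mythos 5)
Your proof is correct and follows essentially the same route as the paper's: both reduce the question via Lemma~\ref{lemma2} to where the two free triangle vertices may sit, namely on the segments $[\mathsf{s}_{+|1},\mathsf{c}_1]$ and $[\mathsf{s}_{-|1},\mathsf{c}_2]$. The one substantive difference is that you make explicit a step the paper's sufficiency argument glosses over: the region actually swept by the admissible base segments is the quadrilateral $\mathsf{s}_{+|1}\mathsf{c}_1\mathsf{c}_2\mathsf{s}_{-|1}$, not all of $\triangle\mathsf{p}\mathsf{c}_1\mathsf{c}_2$, and it is the no-signalling constraint (placing $\mathsf{b}$ on both $[\mathsf{p},\mathsf{s}_{-|0}]$ and $[\mathsf{s}_{+|1},\mathsf{s}_{-|1}]$) that guarantees $\mathsf{s}_{-|0}$ cannot land in the excised corner $\triangle\mathsf{p}\mathsf{s}_{+|1}\mathsf{s}_{-|1}$. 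The paper's proof simply asserts that $\mathsf{c}_1$ and $\mathsf{c}_2$ ``can always be moved along these lines toward (but not beyond) $\mathsf{s}_{+|1}$ and $\mathsf{s}_{-|1}$'' to produce a triangle through $\mathsf{s}_{-|0}$, which is precisely the assertion your separation argument justifies; your version is thus a slightly more rigorous rendering of the same idea.
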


\begin{proof}
As in the proof of Lemma~\ref{lemma2}, we assume that the pure steered state $\mathsf{p}$ corresponds to $\mathsf{s}_{+\mid 0}$; see Fig.~\ref{geo}.
To prove sufficiency, suppose $\triangle \mathsf{p}\mathsf{c}_1 \mathsf{c}_2$ contains the point $\mathsf{s}_{-\mid 0}$, which corresponds to the mixed state in the same ensemble as $\mathsf{p}$.
By construction, $\mathsf{c}_1$ and $\mathsf{c}_2$ are collinear with the pairs $(\mathsf{p},\mathsf{s}_{+|1})$ and $(\mathsf{p},\mathsf{s}_{-|1})$ respectively, they can always be moved along these lines toward (but not beyond) $\mathsf{s}_{+|1}$ and $\mathsf{s}_{-|1}$ to find a triangle which satisfies Lemma~\ref{lemma2}.

Necessity follows from the fact that $\mathsf{c}_1$ and $\mathsf{c}_2$ are on extreme points of the Bloch ball.
Since $\mathsf{s}_{+|1}$ and $\mathsf{s}_{-|1}$ must necessarily lie on the sides of a triangle for a LHS model to exist, its two vertices not at $\mathsf{p}$ can only be as far away as $\mathsf{c}_1$ and $\mathsf{c}_2$.
If $\mathsf{s}_{-\mid 0}$ is outside $\triangle \mathsf{p}\mathsf{c}_1 \mathsf{c}_2$, at least one of the vertices of the triangle must lie outside $\mathsf{C}_\theta$.
This requirement precludes a LHS model from existing.
\end{proof}

The results in this section apply to any 1PQA, and do not assume anything about the dimension of Alice's system.
In the following sections, we restrict her system to qubits, and prove interesting connections between the geometry of two-qubit steering---as exemplified by the steering ellipsoid formalism---and how it relates to the steerability of 1PQAs.
\blk

\section{The Scenario}
\label{sec:Scenario}

\subsection{The quantum steering ellipsoid}

 Let $\boldsymbol{\sigma}=(\sigma_x,\sigma_y,\sigma_z)^\top$ denote the vector of the Pauli spin operators.
Any two-qubit state $\rho_\mathrm{AB}$ can be expressed in terms of these and the identity operator $I$ as 
\begin{equation}\label{atq}
\rho_\mathrm{AB}=\frac{1}{4}\left({I} \otimes {I}+{\boldsymbol{a}} \cdot \boldsymbol{\sigma} \otimes {I}+{I}\otimes {\boldsymbol{b}} \cdot \boldsymbol{\sigma}+ \boldsymbol{\sigma} \otimes \cdot\ T \boldsymbol{\sigma}
\right),
\end{equation}
 where the combined tensor and dot product are defined, for some vectors of operators ${\boldsymbol{\alpha}}$ and ${\boldsymbol{\beta}}$, as 
${\boldsymbol{\alpha}} \otimes \cdot\ {\boldsymbol{\beta}} = \sum_{i} \alpha_i \otimes \beta_i$.
The parameters in the state are  ${\boldsymbol{a}}=({a}_x,{a}_y,{a}_z)^\top$ and ${\boldsymbol{b}}=({b}_x,{b}_y,{b}_z)^\top$, the Bloch vectors for Alice and Bob’s reduced qubit states, and ${T}$, the spin correlation matrix. 
That is
\begin{align}
&{a}_{i} \coloneqq\operatorname{Tr}\left[\rho_\mathrm{AB}(\sigma_{i} \otimes I)\right], \
{b}_{j}\coloneqq\operatorname{Tr}\left[\rho_\mathrm{AB}(I \otimes \sigma_{j})\right], \
{T}_{ij} \coloneqq\operatorname{Tr}\left[\rho_\mathrm{AB}(\sigma_{i} \otimes \sigma_{j})\right]. \notag
\ %(i, j=x,y,z).
\end{align}
We are interested in all possible 1PQAs that may arise for Bob as a result of two dichotomic measurements on Alice's side.

If both of Alice's measurements are projective, we denote their effects by 
\begin{equation}\label{projector}
\Pi_{r|x}=\frac{1}{2}(I+r\hat{\boldsymbol{n}}_{x} \cdot \boldsymbol{\sigma})
\end{equation}
with associated unit vector $\hat{\boldsymbol{n}}_{x}$.
The corresponding ensemble for Bob comprises the two conditioned states 
\begin{equation}\label{con-state}
\rho_{r|x}=\frac{1}{2}\left[{I}+\left(\frac{\boldsymbol{b}+r{T}^{\top} \hat{\boldsymbol{n}}_{x}}{1+r\hat{\boldsymbol{n}}_{x}\cdot \boldsymbol{a}}\right) \cdot \boldsymbol{\sigma}\right],
\end{equation}
with corresponding steering probabilities  
$p(r|x)=\frac{1}{2}(1+r\hat{\boldsymbol{n}}_{x} \cdot \boldsymbol{a})$.
Considering all possible projective measurements for Alice, the set of Bloch vectors of Bob’s conditioned states forms the surface of an ellipsoid $\mathcal{E}$ inside his Bloch sphere $\mathcal{S}$:
\begin{equation}\label{qellipsoid}
\partial\mathcal{E}(\boldsymbol{a},\boldsymbol{b},T)=\left\{\frac{\boldsymbol{b}+r T^{\top} \hat{\boldsymbol{n}}_{x}}{1+r\hat{\boldsymbol{n}}_{x} \cdot \boldsymbol{a}}\right\}_{\hat{\boldsymbol{n}}_{x}}
\end{equation}
 Moreover, the interior points of the ellipsoid $\mathcal{E}$ can be steered to if Alice performs POVMs~\cite{Jev14}. The centre of the ellipsoid (not to be confused with Bob's reduced state, having Bloch vector $\boldsymbol{b}$) is
\begin{equation}
\boldsymbol{c}=\frac{\boldsymbol{b}-T^{\top} \boldsymbol{a}}{1-|\boldsymbol{a}|^{2}},
\label{centre}
\end{equation}
and the three semiaxes are the roots of the eigenvalues of the orientation matrix
\begin{equation}\label{orientationmatrix}
Q=\frac{1}{1-|\boldsymbol{a}|^{2}}\left(T-\boldsymbol{a b}^{\top}\right)^{\top}\left(I+\frac{\boldsymbol{a a}^{\top}}{1-|\boldsymbol{a}|^{2}}\right)\left(T-\boldsymbol{a} \boldsymbol{b}^{\top}\right).
\end{equation}
The eigenvectors of $Q$ give the orientation of the ellipsoid around its centre~\cite{Jev14}.

\subsection{Tangent quantum steering ellipsoids}

We are interested in all possible steering ellipsoids which correspond to situations where only one of Bob's steered ensembles contains a pure state. 
From the set of Bloch vectors in Eq.~\eqref{qellipsoid}, it is clear that this will be true if and only if, 
\begin{equation}
\|\boldsymbol{b}+rT^{\top} \hat{\boldsymbol{n}}_{x}\| = 1+r\hat{\boldsymbol{n}}_{x} \cdot \boldsymbol{a}
\end{equation} 
for a unique $\hat{\boldsymbol{n}}_{x}$. 
This ensures that the quantum steering ellipsoid is tangent at exactly one point with Bob's Bloch sphere.
We refer to steering ellipsoids which have this property as \emph{tangent quantum steering ellipsoids.}
By performing a rigid rotation of both the steering ellipsoid and the Bloch ball, we can make that pure steered state $\ket{0}$.
 This is algebraically equivalent to constraining one of the Bloch vectors in Eq.~\eqref{qellipsoid} to satisfy 
\begin{align}\label{onepure1}
\frac{\boldsymbol{b}+rT^{\top} \hat{\boldsymbol{n}}_{x}}{1+r\hat{\boldsymbol{n}}_{x} \cdot \boldsymbol{a}}=(0,0,1)^\top,
\end{align}
 for some choice of $r$ and $x$.
 We will give the corresponding algebraic conditions this requirement places on the steering ellipsoid $\mathcal{E}$ in the next section.

Based on the one pure steered state condition Eq.~\eqref{onepure1}, we define the scenario we will consider.
\begin{definition}[The Scenario]
\label{scenario}
Let Alice and Bob share a pair of qubits such that Bob's qubit's steering ellipsoid $\mathcal{E}$ is of nonzero volume. 
Alice can perform two dichotomic projective measurements on her qubit, and for exactly one outcome of one measurement, steers Bob's qubit to a pure state with some probability $0<p_\mathsf{p}<1$.
\end{definition} 
It is important to mention that a quantum steering ellipsoid with a nonzero volume in the Scenario indicates that the two-qubit state is entangled. In other words, if a pure steered state exists, the two-qubit state is entangled if and only if its steering ellipsoid has a nonzero volume. This is due to the fact that a two-qubit state is separable if and only if its steering ellipsoid fits inside a tetrahedron that fits within the Bloch sphere. 

\section{ Steerability from projective geometry}
\label{sec:projective_geometry}

The Scenario just defined (Def.~\ref{scenario}), gives rise to a 1PQA (Def.~\ref{definition2}), the states of which must, from the steering ellipsoid formalism, 
lie on the boundary of an ellipse. 
Furthermore, the steerability of a 1PQA is determined entirely by the properties of a particular triangle with all of its vertices on $\mathsf{C}_\theta$ ({\em cf.} Corollary~\ref{corollary}), with its shape determined by the geometry of the steered ensembles.
Guided by these observations, we will use tools from projective geometry---which permits transformations between conic sections that preserve collinearities between points---to derive novel conditions for steerability.
In fact, we will see this allows us to determine steerability from steering ellipsoid geometry and the probability with which Bob's pure state appears in his ensemble.
We begin by introducing some basic ideas required for our analysis, and refer the reader to~\cite{Ric11}  
for further background.

\subsection{Projective geometry}

%background on projective transformations
The real projective plane $\mathbb{P}^2$ can be represented by the set of all lines through the origin in $\mathbb{R}^3$.
More precisely, any vector $(x,y,z)^\top$ in $\mathbb{R}^3$ (excluding the zero vector) defines a unique line through the origin.
All scalar multiples of this vector $(\lambda x,\lambda y,\lambda z)^\top$ for $\lambda \neq 0$ form an equivalence class, since they correspond to the same line through the origin (and hence the same point in the projective plane).
Naturally, this leads to a representation of points in the projective plane by vectors in 3-dimensional space using \emph{homogeneous coordinates}.

Our main idea is to take $\mathsf{E_\theta}$ and $\mathsf{C_\theta}$ which exist in the Euclidean plane $\mathbb{R}^2$, and projectively transform them by embedding them in the real projective plane $\mathbb{P}^2$.
In $\mathbb{R}^3$, this can be achieved by taking the Euclidean plane to be the $z=1$ plane~\cite{Ric11}; see Fig.~\ref{fig-map4}(b).
Each point $(x,y)\in\mathbb{R}^2$ then corresponds to the point $(x,y,1)^\top$.
For all other vectors $(x,y,z)^\top$, if $z\neq0$ they correspond to the Euclidean point $(x/z,y/z,1)^\top$; these two sets of homogeneous coordinates in 3-dimensions correspond to the same point in the projective plane.
Vectors with zero $z$ component correspond to ``points at infinity.''
If we embed $\mathsf{C}_\theta$ and $\mathsf{E}_\theta$  into the $z=1$ plane, as shown in Fig.~\ref{fig-map4} with the $\theta$ subscripts omitted, the equations describing them can be generally represented in homogeneous coordinates $\vec{\mathsf{x}}=(x,y,1)^\top$ by all points which satisfy
\begin{equation}
\vec{\mathsf{x}}^\top A \vec{\mathsf{x}} = 0.
\label{eq:conic}
\end{equation}
This form of the equation is called a conic form, and the matrix $A$ can be taken to be real and symmetric.
Ellipses and circles are examples of nondegenerate conics; these are conics which are nonempty and neither a point nor a pair of lines.

For our analysis, we will be interested in manipulating conics by performing projective transformations.
Such transformations can be represented by matrix multiplication in homogeneous coordinates, by a $3\times3$ invertible matrix $H$.
That is, the projective point $\vec{\mathsf{x}}$ is transformed to $H\vec{\mathsf{x}}$. 
If we begin with the conic in Eq.~\eqref{eq:conic}, after the transformation these points satisfy $(H^{-1}\vec{\mathsf{x}})^\top A (H^{-1}\vec{\mathsf{x}})=0$. 
This means that the matrix defining the conic transforms as
\begin{equation}
A \mapsto (H^{-1})^\top A H^{-1}.
\end{equation}
Later, we will use a key property of projective transformations, obvious from the above, that they preserve collinearity (see Theorem 3.2 of~\cite{Ric11}).

Using homogeneous coordinates, we can also give the corresponding algebraic conditions for when ${\cal S}$ touches ${\cal E}$ at exactly one point. 
Consider Bob's Bloch sphere ${\cal S}$:  $\vec{\mathsf{x}}^{\top}\!S \vec{\mathsf{x}}=0$ and the steering ellipsoid ${\cal E}$: $\vec{\mathsf{x}}^{\top}\!E \vec{\mathsf{x}}=0$, where $\vec{\mathsf{x}}=(wx, wy, wz, w)^{\top}$ is the homogeneous coordinate of the point $(x, y, z)$ in $\mathbb{R}^3$, with $w$ being any non-zero real number.
Further, $S$ and $E$ are $4\times4$ real symmetric matrices. 
The characteristic polynomial of the sphere ${\cal S}$ and ellipsoid ${\cal E}$ is defined as
\begin{equation}
\chi(\kappa)=\operatorname{det}( \kappa S-E),
\end{equation}
which is a polynomial of degree 4 in $\kappa$.
The steering ellipsoid ${\cal E}$ is in Bob's Bloch sphere ${\cal S}$ and touches  the surface of the Bloch sphere ${\cal S}$ at a single point if and only if the ordered roots of the characteristic equation $\chi(\kappa)=0$ are real and satisfy \cite{Jia20}
\begin{equation}\label{onepure2}
0<\kappa_1=\kappa_2\leq \kappa_3\leq \kappa_4.
\end{equation}
Having introduced these concepts, we now prove a lemma that will be relevant to steerability in the Scenario (Def.~\ref{scenario}), as we will see in the next subsection.

\begin{figure}[h]\centering
\includegraphics[angle=0,width=0.42\linewidth]{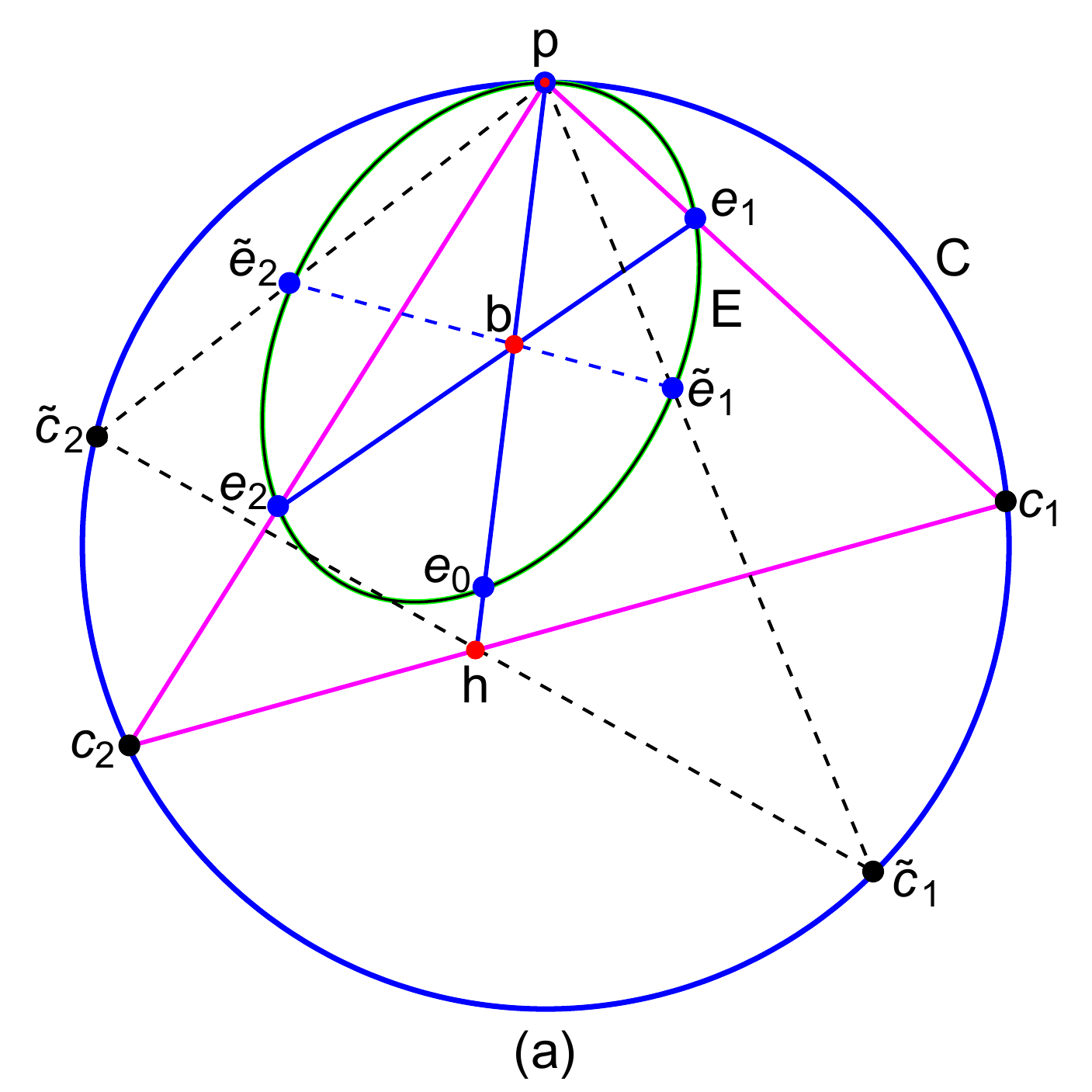}
\hspace{0.2cm}
\includegraphics[angle=0,width=0.42\linewidth]{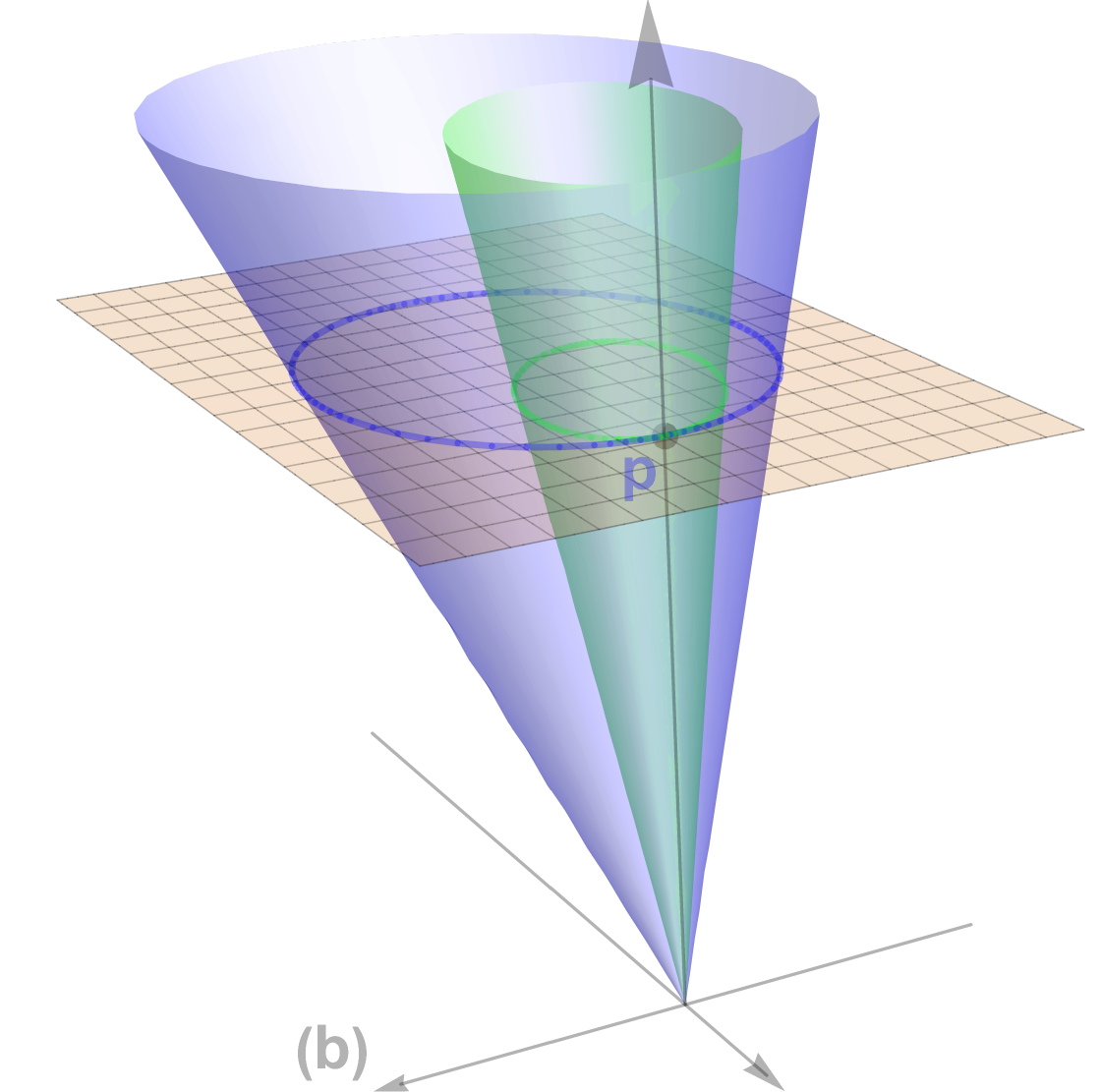}
\vspace{-10pt}
\caption{(a) Geometric structure that arises in consideration of the Scenario, involving a circle $\mathsf{C}$ having an interior tangent ellipse  $\mathsf{E}$, the interior of which contains a given point $\mathsf{b}$. Lemma~\ref{lemma3} shows that the point $\mathsf{h}$ is independent of the choice of $\mathsf{e_1}$ (as illustrated with another choice, $\mathsf{\tilde{e}_1}$). The proof is done using projective geometry. 
(b) Embedding the conics $\mathsf{C}$ and $\mathsf{E}$ in $\mathbb{R}^3$ via the $z=1$ plane (yellow grid). There are infinitely many linear transformations in $\mathbb{R}^3$ mapping the green to the blue cone, only one of which maps an arbitrary $\mathsf{e_1}$ to the $\mathsf{c_1}$ indicated by the construction in (a). 
}
\label{fig-map4} 
\end{figure}

\begin{lemma}
\label{lemma3}
Consider an ellipse $\mathsf{E}$ strictly within a circle $\mathsf{C}$ except for a tangent point $\mathsf{p}$, and a point $\mathsf{b}$ strictly within $\mathsf{E}$, as shown in Fig.~\ref{fig-map4}.
Consider an arbitrary point $\mathsf{e_1}$ on the boundary of $\mathsf{E}$, and a point $\mathsf{e_2}$ given by the intersection of line $\mathsf{be_1}$ with the boundary of $\mathsf{E}$, such that both $\mathsf{e_1}$ and $\mathsf{e_2}$  are distinct from $\mathsf{p}$. 
Define the points $\mathsf{c_1}$ and $\mathsf{c_2}$ as the intersections of  $\mathsf{C}$ with the lines $\mathsf{pe_1}$ and $\mathsf{pe_2}$, respectively. 
Define the point $\mathsf{h}$ as the intersection of $\mathsf{pb}$ and $\mathsf{c_1c_2}$. 
Then the point $\mathsf{h}$ is independent of the choice of $\mathsf{e_1}$.
In particular, $\mathsf{h}$ is the image of $\mathsf{b}$ under the unique projective map that, for any choice of $\mathsf{e_1}$, maps  $\mathsf{e_1}$ to $\mathsf{c_1}$. 
\end{lemma}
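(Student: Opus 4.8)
The plan is to reduce the whole statement to the existence of a single projective collineation $H$ of $\mathbb{P}^2$ with two properties: it carries the conic $\mathsf{E}$ onto the conic $\mathsf{C}$, and it fixes $\mathsf{p}$ together with every line through $\mathsf{p}$. Such a map is a \emph{perspective collineation} (homology) with centre $\mathsf{p}$. Once $H$ is available, the lemma follows almost immediately. Because $H$ fixes the line $\mathsf{pe_1}$ setwise and sends $\mathsf{E}$ to $\mathsf{C}$, the image $H(\mathsf{e_1})$ lies on both $\mathsf{C}$ and $\mathsf{pe_1}$ and is distinct from $\mathsf{p}$, so $H(\mathsf{e_1})=\mathsf{c_1}$; the identical argument gives $H(\mathsf{e_2})=\mathsf{c_2}$. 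Since $H$ preserves collinearity, the collinear triple $\mathsf{e_1},\mathsf{b},\mathsf{e_2}$ maps to a collinear triple, so $H(\mathsf{b})$ lies on $\mathsf{c_1c_2}$; and because $\mathsf{pb}$ is a line through the centre it is fixed, so $H(\mathsf{b})$ also lies on $\mathsf{pb}$. Hence $H(\mathsf{b})=\mathsf{pb}\cap\mathsf{c_1c_2}=\mathsf{h}$, and since $H$ depends only on $\mathsf{E}$, $\mathsf{C}$ and $\mathsf{p}$ and not on the choice of $\mathsf{e_1}$, the point $\mathsf{h}$ is independent of $\mathsf{e_1}$.

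The substantive step, which I expect to be the main obstacle, is constructing $H$ and showing it is unique. I would work in homogeneous coordinates chosen so that $\mathsf{p}=(0,0,1)^\top$ and the common tangent to $\mathsf{E}$ and $\mathsf{C}$ at $\mathsf{p}$ is the line $y=0$; this is legitimate because two conics tangent at $\mathsf{p}$ share their tangent line there. In these coordinates a nondegenerate conic through $\mathsf{p}$ tangent to $y=0$ has a symmetric matrix whose $(3,3)$ and $(1,3)$ entries vanish and whose $(2,3)$ entry is nonzero; moreover its $(1,1)$ entry cannot vanish, since otherwise the conic would also contain the infinite point $(1,0,0)^\top$ of the tangent line, giving a third intersection with $y=0$ and contradicting tangency. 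Writing $A_{\mathsf{E}}$ and $A_{\mathsf{C}}$ in this normal form, and writing a centre-$\mathsf{p}$ homology as $H=\left(\begin{smallmatrix}1&0&0\\0&1&0\\u&v&w\end{smallmatrix}\right)$ with $w\neq0$, I would compute $(H^{-1})^\top A_{\mathsf{E}} H^{-1}$ and impose proportionality to $A_{\mathsf{C}}$. The key point is that this transformation leaves the $(1,1)$ entry unchanged, so the proportionality constant is forced to be the ratio of the two (nonzero) $(1,1)$ entries, after which $u,v,w$ are uniquely determined by matching the three remaining independent entries. This yields both existence and uniqueness of $H$ among homologies centred at $\mathsf{p}$.

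Finally, to match the assertion that $H$ is \emph{the} projective map sending every $\mathsf{e_1}$ to the corresponding $\mathsf{c_1}$, I would note that any two collineations with this property agree on all of $\mathsf{E}$, so the composite of one with the inverse of the other fixes every point of $\mathsf{E}$; since a nondegenerate conic contains four points no three of which are collinear, and a collineation fixing four points in general position is the identity, the two maps coincide. The only genuine work is the coordinate normalisation and the short matrix computation establishing the homology; the collinearity bookkeeping and the uniqueness argument are then routine.
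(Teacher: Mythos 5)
Your proposal is correct and follows essentially the same route as the paper's proof: both reduce the lemma to the existence and uniqueness of a planar homology with centre $\mathsf{p}$ carrying $\mathsf{E}$ onto $\mathsf{C}$, then conclude $H(\mathsf{b})=\mathsf{h}$ from preservation of collinearity together with the fixed pencil of lines through the centre, and settle uniqueness via the fundamental theorem of projective geometry applied to four points of the conic. The only divergence is in how the homology is produced --- the paper solves $(H^{-1})^\top E H^{-1}=C$ explicitly for a concretely parametrized circle and ellipse (radius $R$, semiaxes $m,n$, tilt $\delta$), obtaining the entries $\alpha,\beta,\gamma$ it reuses later, whereas your normal-form argument (tangency killing the $(1,3)$ and $(3,3)$ entries, and the invariance of the $(1,1)$ entry under centre-$\mathsf{p}$ homologies pinning down the proportionality constant) is a cleaner way to get existence and uniqueness, and your derivation of $H(\mathsf{e_1})=\mathsf{c_1}$ is in fact spelled out more carefully than in the paper.
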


\begin{proof}
The construction, given in the Lemma, for mapping any point $\mathsf{e_1}$ on $\mathsf{E}$ to $\mathsf{c_1}$ on $\mathsf{C}$ defines a one-to-one map $v$ from $\mathsf{E}$ to $\mathsf{C}$.  The tangent point $\mathsf{p}$ is a fixed point of $v$ (consider the limit as $\mathsf{e_1}$ goes to $\mathsf{p}$).  
We claim $v$ is the restriction of a unique projective map, $f$, to the domain $\mathsf{E}$. 
This postulated $f$ obviously has $\mathsf{p}$ as a fixed point, meaning that it is a central projection map, with $\mathsf{p}$ as the centre. 
Say that we can find one such map $f$ from $\mathsf{E}$ to $\mathsf{C}$ and for any point $\mathsf{e_i} \in \mathsf{E}$, the points $\mathsf{p}$, $\mathsf{e_i}$, and $\mathsf{c_i}=f(\mathsf{e_i})\in \mathsf{C}$ are collinear, as in Fig.~\ref{fig-map4}. 
Then $f$ is unique by the fundamental theorem of projective geometry for the plane (see Theorem 3.2.7 of~Ref.~\cite{agoston2005computer}), which says that there is a unique projective transformation that sends four given points ($\mathsf{{a}_1}$, $\mathsf{{b}_1}$, $\mathsf{{c}_1}$, and $\mathsf{{d}_1}$) into another four given points ($\mathsf{{a}_2}$, $\mathsf{{b}_2}$, $\mathsf{{c}_2}$, and $\mathsf{{d}_2}$), if no three in either
set are collinear. (This follows because an ellipse has more than 3 points!) 

Below, we derive an explicit expression for a projective map $f$ with the desired properties, for arbitrary $\mathsf{E}$ and $\mathsf{C}$. 
The projective map we find is a special type of central projection (with centre $\mathsf{p}$) called a planar homology~\cite{hartley03,crannell2019perspective,casse2006projective}. This has the property that, for any point $\mathsf{s}$, the points 
$\mathsf{p}$, $\mathsf{s}$, and $f(\mathsf{s})$ are collinear. Now, $\mathsf{b}$ lies on the line segment $\mathsf{e_1e_2}$ as shown in Fig.~\ref{fig-map4}.  
Since all projective transformations preserve collinearity~\cite{Ric11, agoston2005computer}, $f$ maps the line $\mathsf{e_1e_2}$ to the line $\mathsf{c_1c_2}$. 
Hence it maps $\mathsf{b}$ to a point on $\mathsf{e_1e_2}$. But, since $f$ is a planar homology, this point must also be on the line $\mathsf{pb}$. 
Therefore, this point $f(\mathsf{b})$, must be $\mathsf{h}$, the intersection of $\mathsf{pb}$ and $\mathsf{c_1c_2}$. 
\end{proof}

\begin{figure}[h]\centering
\includegraphics[angle=0,width=0.42\linewidth]{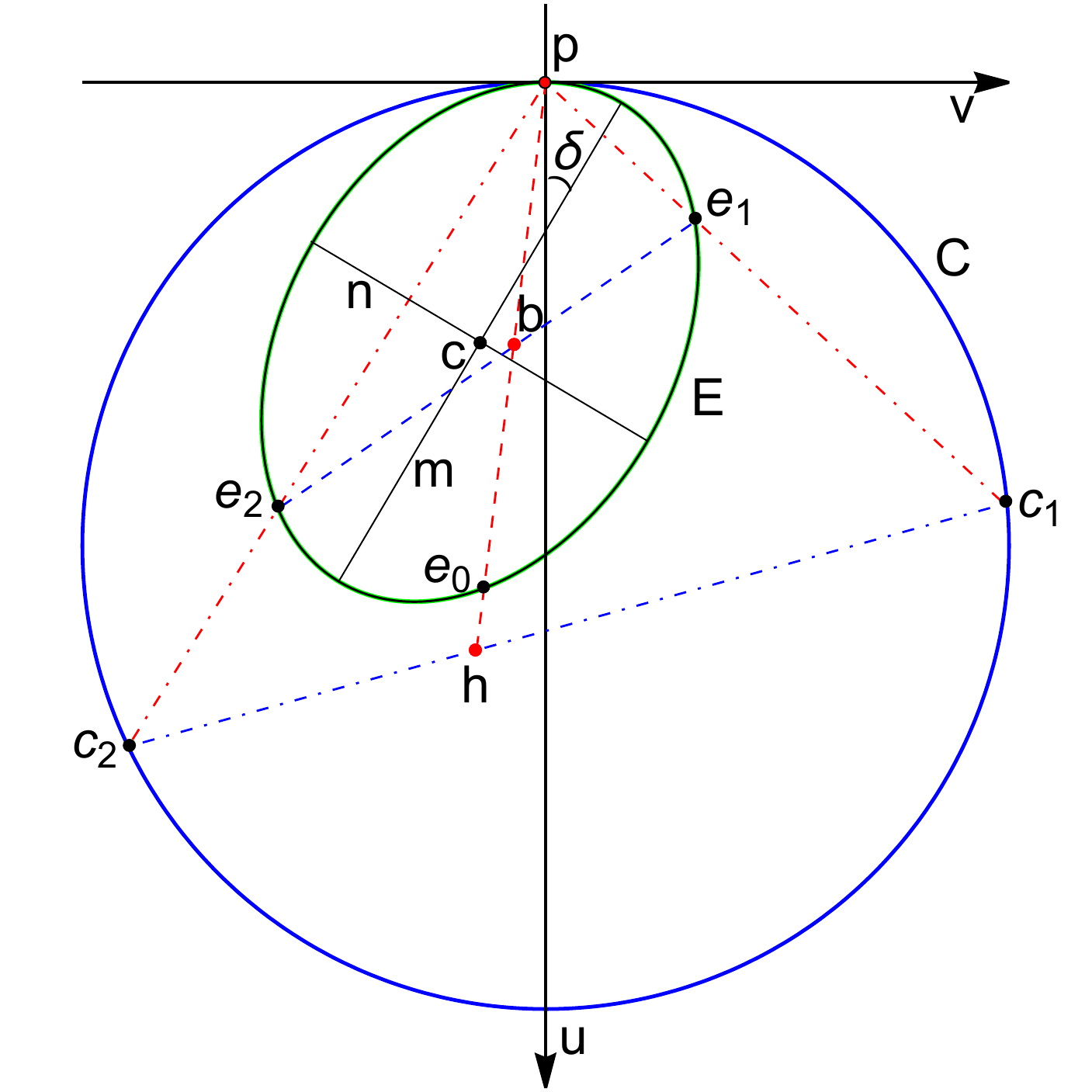}
\vspace{-10pt}
\caption{Planar homology $f$ with centre $\mathsf{p}$ which maps from the ellipse $\mathsf{E}$ to the circle $\mathsf{C}$, so that  $f(\mathsf{e_1})=\mathsf{c_1}$, $f(\mathsf{e_2})=\mathsf{c_2}$, and $f(\mathsf{b})=\mathsf{h}$.}
\label{pbh}
\end{figure}

To find $f$ it will be useful to give explicit quadratic forms for $\mathsf{E}$ and $\mathsf{C}$. As shown in Fig.~\ref{pbh}, consider a circle $\mathsf{C}$ with radius $R$, tangent to the origin $\mathsf{p}$ of the Euclidean plane.
We can express this as the parametric equation in coordinates $u$ and $v$, 
\begin{equation}\label{circle}
(u-R)^2+v^2-R^2=0.
\end{equation}
In terms of homogeneous coordinates, this can equivalently be written as the conic $\vec{\mathsf{x}}^{\top}\!C\vec{\mathsf{x}}=0$, where  $\vec{\mathsf{x}} = (u,v,1)^\top$ and 
\begin{align}
C \coloneqq\begin{pmatrix}
 1 & 0 & -R \\
 0 &  1 & 0 \\
 -R & 0 & 0
\end{pmatrix}.
\label{eq:circle_conic}
\end{align}
For the ellipse $\mathsf{E}$, we suppose its centre is located at $(u_c,v_c)$.
Any such ellipse is described in this plane by
\begin{equation}\label{rotationellipse}
\frac{\left((u-u_c)\cos{\delta}+(v-v_c)\sin{\delta}\right)^2}{m^2} +\frac{\left(-(u-u_c)\sin{\delta}+(v-v_c)\cos{\delta}\right)^2}{n^2}-1=0.
\end{equation}
Here, $m$ and $n$ are the semiaxes of the ellipse and $\delta$ is the angle of clockwise rotation of the major axis with the $u$-axis. 
We are interested in ellipses which are tangent to $\mathsf{C}$ at $\mathsf{p}=(0,0)$. 
By constraining the ellipse to pass through this point, we have \blk
\begin{align}
\left(\frac{{u_c} \cos\delta+{v_c} \sin\delta}{m}\right)^2+\left(\frac{{u_c} \sin \delta-{v_c} \cos \delta}{n}\right)^2=1.
\end{align}
Further, we must have $\dv{u}{v}|_{(0,0)}=0$ so that the ellipse is tangent at this point. 
This means that\blk
\begin{equation}
\frac{1}{2}{u_c}(m^2-n^2)\sin(2\delta)-{v_c}(m^2\cos ^2\delta+n^2\sin ^2\delta)=0.
\end{equation}
Then, by defining 
\begin{align}
G\coloneqq\sqrt{m^2+n^2+(m^2-n^2)\cos(2\delta)}, 
\end{align}
we can write
\begin{align}
u_c=\frac{G}{\sqrt{2}},\quad
v_c=\frac{(m^2-n^2)\sin (2 \delta )}{\sqrt{2}G}.
\end{align}
From this, the ellipse is represented by the conic  $\vec{\mathsf{x}}^{\top}\!E \vec{\mathsf{x}}=0$, with 
\begin{align}
E=\begin{pmatrix}
 1-2\mu & -\nu & -\xi \\
  -\nu &  1 & 0 \\
 -\xi & 0 & 0
\end{pmatrix},
\label{eq:ellipse_conic}
\end{align}
where $\mu\coloneqq(m^2-n^2)\cos (2\delta )/G^2$, $\nu\coloneqq(m^2-n^2)\sin (2\delta )/G^2$, and $\xi\coloneqq2 \sqrt{2} m^2 n^2/G^3$.

Recall that a projective transformation is described by the matrix
\begin{align}
H=\begin{pmatrix}
 h_{11} & h_{12}  & h_{13}  \\
 h_{21}  &  h_{22}  & h_{23}  \\
h_{31}  & h_{32}  & h_{33} 
\end{pmatrix}.
\end{align}
For the conics considered above,  we wish to transform from the ellipse $\mathsf{E}$ to the circle $\mathsf{C}$.
Since this transformation must preserve $\vec{\mathsf{p}} = (0,0,1)^\top$, \emph{i.e.~}$H\vec{\mathsf{p}}=\vec{\mathsf{p}}$, we must have $h_{13}=h_{23}=0$. 
Moreover, the projective map in our construction should transform any point on the ellipse $\mathsf{e_i}$ (other than ${\mathsf{p}}$) to a point $\mathsf{c_i}$ collinear with $\mathsf{p}$ and $\mathsf{e_i}$.
This means that, for any values of $u,v$, $H(u,v,1)^\top \propto (u,v,z)^\top$ for some $z$.
This further restricts the projective transformation to have $h_{12}=h_{13}=h_{21}=h_{23}=0$, and $h_{11}=h_{22}$.
Now, we know that $H$ should transform the matrix $E$ to $C$ via
\begin{equation}
E \mapsto (H^{-1})^\top E H^{-1}=C.
\label{eq:projective_map_matrix}
\end{equation}
Solving this equation for the forms of $E$ and $C$ in Eqs.~\eqref{eq:ellipse_conic} and \eqref{eq:circle_conic}, we find  $h_{11}=h_{22}=1$, $h_{31}=\mu/R$, $h_{32}=\nu/R$ and $h_{33}=\xi/R$. 
Therefore, we can write $H$ as
\begin{align}
H=\begin{pmatrix}\label{pmm}
 1 & 0 & 0 \\
 0 &  1 & 0 \\
\alpha & \beta & \gamma
\end{pmatrix},
\end{align}
where 
\begin{align}\label{abc}
\begin{pmatrix}
 \alpha  \\
 \beta   \\
 \gamma
\end{pmatrix}
=\frac{1}{RG^{2}}\begin{pmatrix}
 (m^2-n^2)\cos (2\delta )  \\
 (m^2-n^2)\sin (2\delta )  \\
2 \sqrt{2} m^2 n^2/G
\end{pmatrix}.
\end{align}
The three eigenvalues of transformation matrix $H$ are 1,1, and $\gamma$. 
Since $H$ has one distinct and two equal eigenvalues, it is a planar homology~\cite{hartley03}.

\subsection{Steering in the Scenario} 
\label{sec:Steering in The Scenario}

From Corollary \ref{corollary} and Lemma \ref{lemma3}, we know that the steerability of a 1PQA in the Scenario is determined entirely by two geometric objects.
These are the point $\mathsf{b}$ (describing Bob's reduced state), and the steering ellipsoid ${\cal E}$, which touches Bob's Bloch sphere at $\mathsf{p}$ (describing the one pure steered state).
Recall that $\cal{E}$ gives information about the centre, orientation and semiaxes of the ellipsoid, which thus also includes the location of $\mathsf{p}$. 
We now present the necessary and sufficient condition for Alice to be able to demonstrate EPR-steering in the Scenario when the plane containing the 1PQA is given. Using Corollary \ref{corollary} and Lemma \ref{lemma3}, we formulate this result as the following Theorem.

\begin{figure}[h]\centering
\includegraphics[angle=0,width=0.44\linewidth]{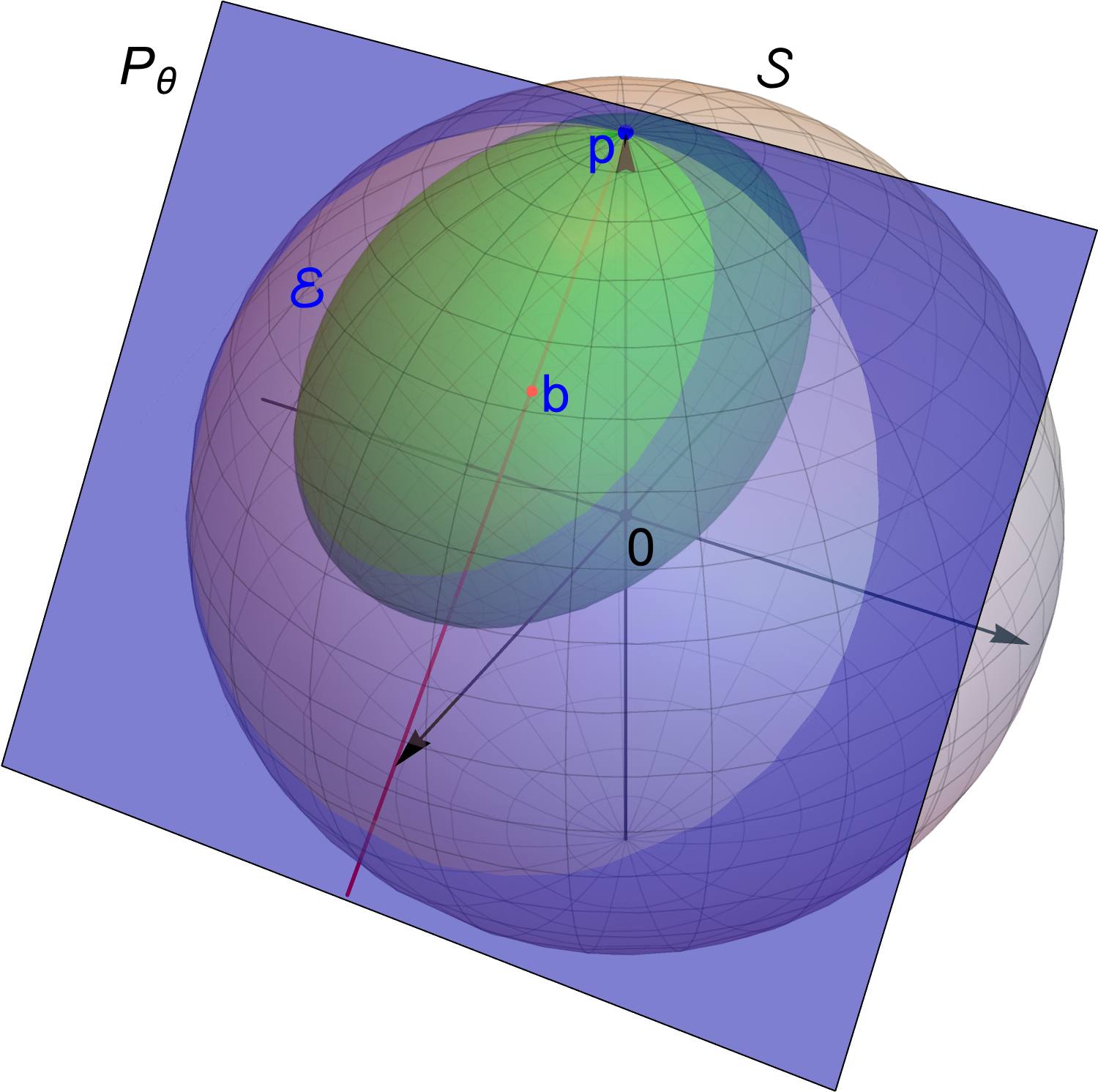}
 \hspace{7pt}
\includegraphics[angle=0,width=0.42\linewidth]{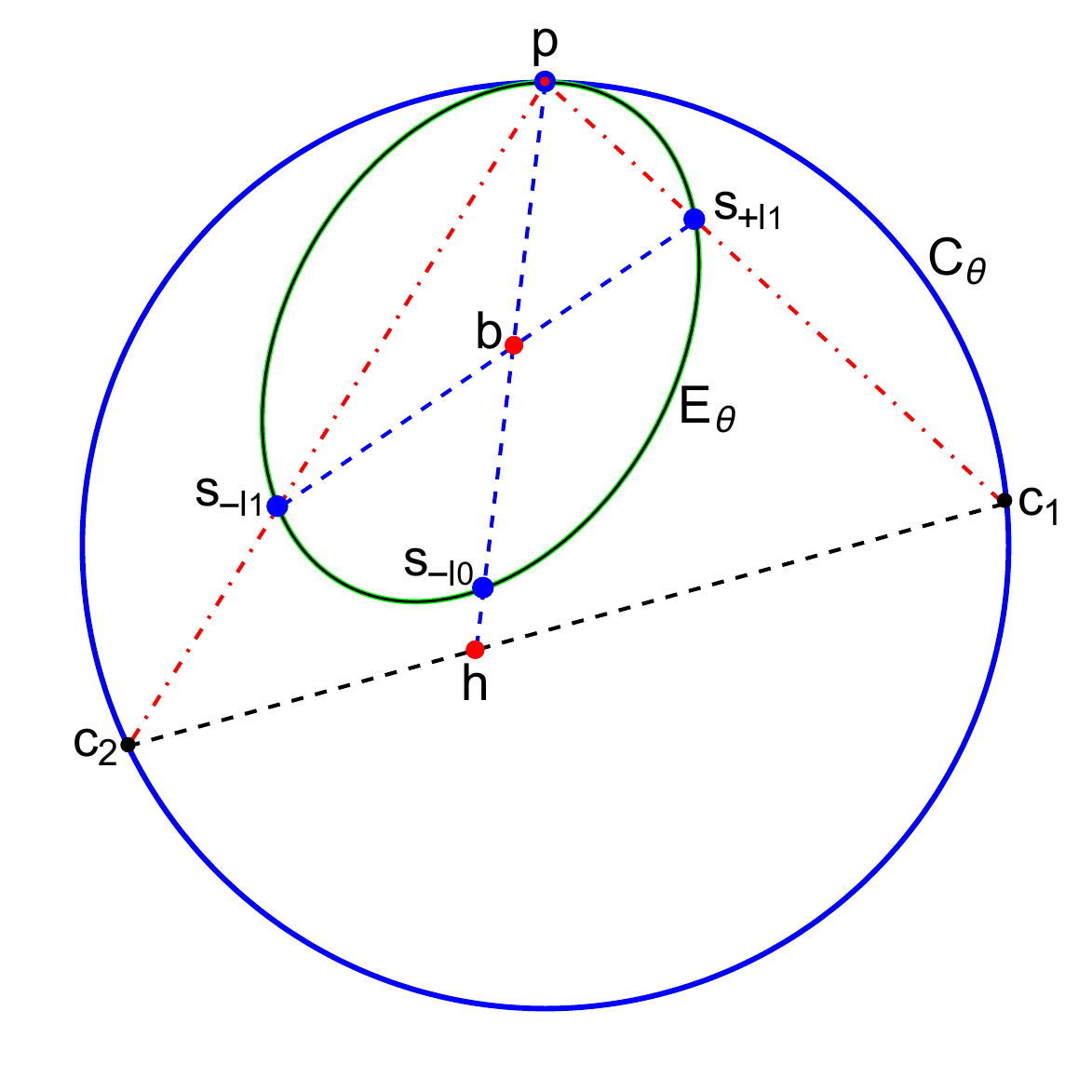}
\vspace{-5pt}
\caption{Illustration of Theorem \ref{theorem}. 
The steering ellipsoid ${\cal E}$ touches the surface of Bob's Bloch sphere ${\cal S}$ at the point $\mathsf{p}$ which represents the pure steered state. 
The red point $\mathsf{b}$ represents the Bloch vector of Bob's reduced state. 
A plane $P_\theta$ contains the line $\mathsf{pb}$, in which the cross sections of ${\cal S}$ and ${\cal E}$ are circles $\mathsf{C_\theta}$ and ellipses $\mathsf{E_\theta}$. 
Four blue points $\mathsf{p, s_{-|0},s_{+|1},s_{-|1}}$ represent four Bob's steered states. The red point $\mathsf{h}$ is the intersection of line $\mathsf{pb}$ and $\mathsf{c_1c_2}$. 
Alice can demonstrate EPR-steering if and only if $\mathsf{h}$ is strictly inside the ellipse $\mathsf{E_\theta}$.
} \label{figgs} 
\end{figure}

\begin{theorem}\label{theorem}
 Consider the one-parameter family of planes ${\cal P}=\{P_\theta| \theta \in [ 0,2\pi)\}$, containing line $\mathsf{pb}$, in which the cross sections of Bob's Bloch sphere ${\cal S}$ and the steering ellipsoid ${\cal E}$ are the circles $\mathsf{C_\theta}$ and ellipses $\mathsf{E_\theta}$. 
Then Alice can demonstrate EPR-steering under the Scenario (Def.~\ref{scenario}) for which the points of the steered states are in the plane $P_\theta$ for some $\theta$, if and only if $f_\theta(\mathsf{b}) \in \mathsf{E_\theta^{\circ}}$, where $\mathsf{E_\theta^{\circ}}$ is the interior of the ellipse $\mathsf{E_\theta}$, 
and $f_\theta$ is the planar homology mapping 
$\mathsf{E_\theta}$ to $\mathsf{C_\theta}$ with centre $\mathsf{p}$.
\end{theorem}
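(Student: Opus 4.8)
The plan is to reduce Theorem~\ref{theorem} to a per-plane statement and then glue together Corollary~\ref{corollary} and Lemma~\ref{lemma3}, which between them already control the two geometric ingredients. First I would fix a plane $P_\theta$ containing the four steered states (the plane spanned by the 1PQA automatically contains $\mathsf{p}$ and $\mathsf{b}$, hence the line $\mathsf{pb}$, so it is one of the $P_\theta$; this is what the ``for some $\theta$'' quantifier records). Since the Scenario (Def.~\ref{scenario}) uses projective measurements, all four steered states lie on $\partial\mathcal{E}$, hence on the ellipse $\mathsf{E}_\theta=P_\theta\cap\partial\mathcal{E}$, while $\mathsf{p}$ is the unique tangent point and so lies on $\mathsf{C}_\theta$ as well. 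Writing $\mathsf{p}=\mathsf{s}_{+|0}$, the no-signalling condition~\eqref{eq:NSpoints} places $\mathsf{b}$ simultaneously on the chord $\mathsf{p}\mathsf{s}_{-|0}$ (from the $x=0$ ensemble) and on the chord $\mathsf{s}_{+|1}\mathsf{s}_{-|1}$ (from the $x=1$ ensemble). In particular $\mathsf{s}_{-|0}$ is exactly the second intersection of the line $\mathsf{pb}$ with $\mathsf{E}_\theta$, so along this line the interior $\mathsf{E}_\theta^\circ$ is precisely the open segment between $\mathsf{p}$ and $\mathsf{s}_{-|0}$.

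This identifies the configuration with that of Lemma~\ref{lemma3} upon setting $\mathsf{e}_1=\mathsf{s}_{+|1}$ and $\mathsf{e}_2=\mathsf{s}_{-|1}$: these are two points of $\mathsf{E}_\theta$ whose chord passes through $\mathsf{b}$, and the points $\mathsf{c}_1,\mathsf{c}_2$ defined there (intersections of $\mathsf{C}_\theta$ with the lines $\mathsf{p}\mathsf{e}_1,\mathsf{p}\mathsf{e}_2$) coincide with the $\mathsf{c}_1,\mathsf{c}_2$ of Corollary~\ref{corollary}. Lemma~\ref{lemma3} then gives that the intersection $\mathsf{h}$ of $\mathsf{pb}$ with the chord $\mathsf{c}_1\mathsf{c}_2$ equals $f_\theta(\mathsf{b})$ and, crucially, does not depend on which admissible second setting Alice chose inside $P_\theta$. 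Thus the verdict will depend only on the plane, $\mathsf{b}$, $\mathsf{E}_\theta$ and $\mathsf{C}_\theta$, which is what underpins the later ``any second setting'' results. All that remains is to translate the convex-hull condition of Corollary~\ref{corollary} into a statement about where $\mathsf{h}$ sits on $\mathsf{pb}$.

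For that translation I would use that $\mathsf{p}$ is the apex of $\triangle\mathsf{p}\mathsf{c}_1\mathsf{c}_2$ and $\mathsf{c}_1\mathsf{c}_2$ its opposite edge, so the line $\mathsf{pb}$ meets the triangle exactly in the segment from $\mathsf{p}$ to $\mathsf{h}$. Hence $\mathsf{s}_{-|0}$ lies in the closed triangle iff it lies between $\mathsf{p}$ and $\mathsf{h}$, and lies outside iff $\mathsf{h}$ lies strictly between $\mathsf{p}$ and $\mathsf{s}_{-|0}$; by the first paragraph the latter is exactly $\mathsf{h}=f_\theta(\mathsf{b})\in\mathsf{E}_\theta^\circ$. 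Combining with Corollary~\ref{corollary} (non-steerable iff $\mathsf{s}_{-|0}$ is in the convex hull of $\triangle\mathsf{p}\mathsf{c}_1\mathsf{c}_2$) yields steerable iff $f_\theta(\mathsf{b})\in\mathsf{E}_\theta^\circ$, as claimed.

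The step requiring most care is the ordering along $\mathsf{pb}$, needed to equate ``$\mathsf{h}$ between $\mathsf{p}$ and $\mathsf{s}_{-|0}$'' with ``$\mathsf{h}\in\mathsf{E}_\theta^\circ$''. Here I would note that $\mathsf{s}_{+|1}$ and $\mathsf{s}_{-|1}$ lie on the segments $\mathsf{p}\mathsf{c}_1$ and $\mathsf{p}\mathsf{c}_2$ (being ellipse points interior to the circle), so their chord, and hence $\mathsf{b}$, lies inside $\triangle\mathsf{p}\mathsf{c}_1\mathsf{c}_2$; consequently $\mathsf{h}$, the exit point of $\mathsf{pb}$ through the edge $\mathsf{c}_1\mathsf{c}_2$, lies strictly beyond $\mathsf{b}$ on the ray from $\mathsf{p}$, ruling out the degenerate orientation (this is also consistent with $f_\theta$, with eigenvalues $1,1,\gamma$ from~\eqref{pmm}--\eqref{abc}, sending the interior of $\mathsf{E}_\theta$ into the interior of $\mathsf{C}_\theta$). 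The only remaining bookkeeping is to line up the strict inclusion of the statement with the boundary case $\mathsf{h}=\mathsf{s}_{-|0}$, where Corollary~\ref{corollary} returns non-steerable and $\mathsf{h}$ lies on $\mathsf{E}_\theta$ rather than in its interior.
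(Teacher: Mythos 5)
Your proposal is correct and follows essentially the same route as the paper: fix the plane, invoke Lemma~\ref{lemma3} to identify $\mathsf{h}=f_\theta(\mathsf{b})$ as the intersection of $\mathsf{pb}$ with $\mathsf{c}_1\mathsf{c}_2$, and translate the convex-hull criterion of Corollary~\ref{corollary} into the position of $\mathsf{h}$ relative to $\mathsf{s}_{-|0}$ along the line $\mathsf{pb}$. Your treatment is in fact somewhat more careful than the paper's, spelling out explicitly (via no-signalling) that $\mathsf{s}_{-|0}$ is the second intersection of $\mathsf{pb}$ with $\mathsf{E}_\theta$ and verifying the ordering of $\mathsf{p}$, $\mathsf{b}$, $\mathsf{h}$ on that line, details the paper leaves implicit.
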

\begin{proof}
Consider all possible steering ellipsoids permitted in the Scenario.
Alice can, by suitably choosing the measurement which \emph{doesn't} steer to the pure state, steer to any point on $\mathsf{E_\theta}$ for any plane $\theta$.
One such plane is illustrated in Fig.~\ref{figgs}, where $\mathsf{b}$ represents Bloch vector of Bob's reduced state, $\mathsf{p}$ is the pure steered state required by the Scenario, and the $\mathsf{s_{-|0},s_{+|1},s_{-|1}}$ represent the three other steered states in Bob's ensembles in this plane.
We know from Lemma \ref{lemma3} that there exists a unique planar homology $f_\theta$ with centre $\mathsf{p}$ mapping $\mathsf{E_\theta}$ to $\mathsf{C_\theta}$, 
and that $\mathsf{h} = f_\theta(\mathsf{b})$. 
Applying Corollary \ref{corollary}, Alice can demonstrate EPR-steering if and only if $\mathsf{h}$ lies between $\mathsf{b}$ and $\mathsf{s_{-|0}}$.
Since $\mathsf{s_{-|0}}$ lies on the boundary of $\mathsf{E_\theta}$. Alice will be able to demonstrate steering iff $f_\theta(\mathsf{b})$ lies strictly inside $\mathsf{E_\theta}$.
\end{proof}

As shown in Figs.~\ref{pbh} and \ref{figgs}, the condition for steerability in Theorem~\ref{theorem} can be equivalently expressed as an inequality involving Euclidean distances for the relevant points, $|{\mathsf{ps_{-|0}}}|>|{\mathsf{ph}}|$.  
That is, $|{\mathsf{pe_{0}}}|>|{\mathsf{ph}}|$ is equivalent to the necessary and sufficient condition $|{\mathsf{ps_{-|0}}}|>|{\mathsf{ph}}|$.  
This allows us to derive a closed-form expression for steerability based on the parametrization of $H$. 
Since the homogeneous coordinate $\vec{\mathsf{b}}=(u_b,v_b,1)^\top$ is mapped to $\vec{\mathsf{h}}=H\vec{\mathsf{b}}$,  using Eq.~\eqref{pmm}  we can get the coordinates of $\mathsf{h} = (u_h, u_v)$ in the $uv$-plane,
\begin{align}
u_h=\frac{u_b}{\alpha u_b+\beta v_b+\gamma},\quad
v_h=\frac{v_b}{\alpha u_b+\beta v_b+\gamma}.
\end{align}
Using the equation of the ellipse, Eq.~\eqref{rotationellipse}, and the equation of the line $\mathsf{pb}$, which is $v=u v_b/u_b$,  the coordinates of their intersection, $\mathsf{e_0}=(u_{e_0}, v_{e_0})$, are 
\begin{align}
u_{e_0}=\frac{2R\gamma u_b^2}{(1-2R\alpha) u_b^2-2R\beta u_b v_b+v_b^2},\quad
v_{e_0}=\frac{2R\gamma u_b v_b}{(1-2R\alpha) u_b^2-2R\beta u_b v_b+v_b^2}.
\end{align}
Since $|{\mathsf{pe_{0}}}|>|{\mathsf{ph}}|$ is equivalent to $u_{e_0}>u_{h}$ in the $uv$-plane, we can reduce this condition to
\begin{align}\label{sinequ}
2 R \left(\gamma ^2+\beta (1+\gamma){v_b}\right){u_b}-\left(1-2 R\alpha  (1+\gamma)\right){u_b}^2-{v_b}^2>0.
\end{align}
This inequality is the necessary and sufficient condition for Alice to be able to demonstrate EPR-steering in the plane $P_\theta$ in the Scenario.

In Theorem \ref{theorem}, different planes $P_\theta$, all containing the line $\mathsf{pb}$, may have different $\mathsf{E}_\theta$ and $\mathsf{C}_\theta$. 
In turn, this will require different homologies $f_\theta$ transforming between the two conics.
To apply Theorem~\ref{theorem}, we can consider the image of $\mathsf{b}$ under all of these maps, and determine steerability.  
We refer to this set, $\{f_\theta(\mathsf{b})\}$, as the \emph{locus} of $\mathsf{h}$, denoted by $\mathsf{L}$ (a line segment).
The position of the locus relative to the steering ellipsoid ${\cal E}$ has three distinct cases relevant to steerability in the Scenario. 
The first of these is where $\mathsf{L}$ is located outside the steering ellipsoid ${\cal E}$.
 In this case, the entire family of planes appearing in the statement of Theorem \ref{theorem} lead to $f_\theta (\mathsf{b})$ being outside the ellipse.
Equivalently, no steering is possible for any choice of second measurement by Alice.  
The second case is where $\mathsf{L}$  penetrates the steering ellipsoid ${\cal E}$. 
In this case, there exists a strict subset of $\theta$ such that a 1PQA in plane $P_\theta$ demonstrates EPR-steering. 
The final case is where $\mathsf{L}$ is contained inside the steering ellipsoid ${\cal E}$. 
Here, any choice of second measurement by Alice will lead to a 1PQA which is steerable. 

 From these observations, we have the following corollary of Theorem \ref{theorem}. 

\begin{corollary}\label{corollary2}
In the Scenario (Def.~\ref{scenario}), consider the one-parameter family of planes ${\cal P}=\{P_\theta| \theta \in [ 0,2\pi)\}$, containing the line $\mathsf{pb}$, in which the cross sections of Bob's Bloch sphere ${\cal S}$ and the steering ellipsoid ${\cal E}$ are circles $\mathsf{C_\theta}$ and ellipses $\mathsf{E_\theta}$.
 Then, 
\begin{enumerate}
\item the condition that, for all $P_{\theta}\in {\cal P}$, $f_\theta(\mathsf{b}) \in \mathcal{E^{\circ}}$ is {\em sufficient} for Alice to demonstrate EPR-steering of Bob, and
\item the existence of a plane $P_\theta\in {\cal P}$ such that $f_\theta(\mathsf{b}) \in \mathsf{E_\theta^{\circ}}$, is {\em necessary} for Alice to demonstrate EPR-steering of Bob. 
\end{enumerate}
where $\mathcal{E^{\circ}}$ (resp.  $\mathsf{E_\theta^{\circ}}$) represents the interior  
of the ellipsoid $\mathcal{E}$ (ellipse $\mathsf{E_\theta}$), 
and $f_\theta$ is the planar homology mapping  $\mathsf{E_\theta}$ to $\mathsf{C_\theta}$ with centre $\mathsf{p}$.
\end{corollary}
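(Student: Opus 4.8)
The plan is to reduce the entire statement to Theorem~\ref{theorem}, which already settles steerability plane-by-plane, and then handle only the quantifier bookkeeping that relates a single plane $P_\theta$ to Alice's global ability to steer. First I would record the two elementary facts that make both halves immediate. Fact (i): ``Alice can demonstrate EPR-steering'' is an existential statement over her choice of second measurement, and every such choice produces a 1PQA whose four points lie in some plane of the family $\mathcal{P}$. Indeed, by no-signalling (Eq.~\eqref{eq:NSpoints}) the chord $\mathsf{p}\,\mathsf{s_{-|0}}$ and the chord $\mathsf{s_{+|1}}\,\mathsf{s_{-|1}}$ both pass through $\mathsf{b}$, so all four steered states lie in the plane spanned by the fixed line $\mathsf{pb}$ and the second chord; this plane contains $\mathsf{pb}$ and is therefore some $P_\theta$. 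Conversely, as already noted in the proof of Theorem~\ref{theorem}, for every $P_\theta\in\mathcal{P}$ Alice can choose a second measurement steering to a chord of $\mathsf{E_\theta}$ through $\mathsf{b}$, realising a 1PQA in that plane. Hence ``Alice can steer'' is equivalent to ``there exists $\theta$ for which the 1PQA in $P_\theta$ is steerable.''

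Fact (ii) is the pointwise equivalence of the two interior conditions appearing in the statement: since $f_\theta$ acts within $P_\theta$, the image $f_\theta(\mathsf{b})$ lies in $P_\theta$, and $\mathsf{E_\theta}=P_\theta\cap\mathcal{E}$, so a point of $P_\theta$ lies in the solid ellipsoid interior $\mathcal{E^{\circ}}$ if and only if it lies in the planar ellipse interior $\mathsf{E_\theta^{\circ}}$. Thus the three-dimensional condition used in part (i) of the corollary and the two-dimensional condition of Theorem~\ref{theorem} are interchangeable for the relevant point $f_\theta(\mathsf{b})$.

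With these in hand both claims follow in one line each. For sufficiency, suppose $f_\theta(\mathsf{b})\in\mathcal{E^{\circ}}$ for all $P_\theta$; by Fact (ii) this gives $f_\theta(\mathsf{b})\in\mathsf{E_\theta^{\circ}}$ for all $\theta$, so by Theorem~\ref{theorem} the 1PQA in every plane is steerable, and in particular Alice can steer. For necessity, if Alice can steer then by Fact (i) some plane $P_\theta$ carries a steerable 1PQA, whence Theorem~\ref{theorem} forces $f_\theta(\mathsf{b})\in\mathsf{E_\theta^{\circ}}$, i.e.\ such a plane exists. I would close by remarking on why the two conditions are genuinely one-sided: the ``for all'' hypothesis is strictly stronger than ``Alice can steer'' precisely when the locus $\mathsf{L}=\{f_\theta(\mathsf{b})\}$ only partially penetrates $\mathcal{E}$ (the second of the three cases preceding the corollary), so neither bound can be upgraded to an equivalence without the finer per-plane analysis of Theorem~\ref{theorem}.

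The main obstacle is not computational but lies in Fact (i): one must verify that the correspondence between Alice's second measurements and the plane family $\mathcal{P}$ is onto, so that the existential ``Alice can steer'' is faithfully captured by ``some $\theta$.'' Here I would lean on the realisability statement already used inside the proof of Theorem~\ref{theorem} (any chord of $\mathsf{E_\theta}$ through $\mathsf{b}$ is the output of some projective measurement), together with the fact that the tangency of $\mathcal{E}$ to $\mathcal{S}$ fixes $\mathsf{p}$, and hence the line $\mathsf{pb}$, uniquely, so that every admissible 1PQA genuinely sits in $\mathcal{P}$ and no steering strategy is overlooked.
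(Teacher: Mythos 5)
Your proposal is correct and follows essentially the same route as the paper, which presents Corollary~\ref{corollary2} as an immediate consequence of Theorem~\ref{theorem} together with the case analysis of the locus $\{f_\theta(\mathsf{b})\}$ relative to $\mathcal{E}$. Your Fact~(i) --- that no-signalling forces every 1PQA of the Scenario into some plane of $\mathcal{P}$, and that every $P_\theta$ is realisable by a choice of second measurement --- makes explicit the quantifier bookkeeping that the paper leaves implicit, but the underlying argument is the same.
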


\section{Applications}             
 \label{sec:examples}
 
We now apply Theorem \ref{theorem}  to three different cases of two-qubit states.
These are tangent X-states, canonical obese states, and tangent sphere states.

\subsection{Tangent X-states}
\label{sec:tangent_x_states}
We first consider the subset of two-qubit X-states that permit Alice to steer to exactly one pure state. 
For brevity, we call such states tangent X-states. 
X-states are the states which the density matrix contains only non-zero elements along the main diagonal and anti-diagonal~\cite{Yu07}, 
\begin{align}
\rho_\text{X}=\begin{pmatrix}
\rho_{11} & 0 & 0 & \rho_{14} \\
0 & \rho_{22} & \rho_{23} & 0 \\
0 & \rho_{23}^\ast & \rho_{33} & 0 \\
\rho_{14}^\ast & 0 & 0 & \rho_{44}
\end{pmatrix},
\end{align}
which has seven real parameters, counting normalization. 
It is worth noting that maximally entangled Bell states~\cite{Nie02}
and ‘Werner’ states~\cite{Wer89} are particular cases of X-states. 
An algebraic characterisation of X-states in quantum information is presented in Ref.~\cite{Rau09}.
Under appropriate local unitary transformations, which preserve EPR-steering~\cite{Qui15}, all elements can be transformed into real numbers. 
Hence we only need to consider the following density matrix with five real parameters by setting $\boldsymbol{a}=(0,0,a)^\top$, $\boldsymbol{b}=(0,0,b)^\top$ and ${T}=\operatorname{diag}\left(t_{x}, t_{y}, t_{z}\right)$. 
This means the X-states we consider have the form
\begin{align}
\rho_\text{X}=\frac{1}{4}\left({I} \otimes {I}+a \sigma_{z} \otimes {I}+{I} \otimes b \sigma_{z}+\sum_{i=x,y,z} t_{i} \sigma_{i} \otimes \sigma_{i}\right),\label{x-states}
\end{align}
According to Eq.~\eqref{onepure1}, an X-state of this form will have a pure steered state if 
\begin{align}\label{xsone}
t_z=1+a-b,
\end{align}
From Eqs.~\eqref{xsone} and \eqref{orientationmatrix}, the three semiaxes of the quantum steering ellipsoid are 
\begin{align}
n_x=\frac{|t_x|}{\sqrt{1-a^2}},\quad
n_y=\frac{|t_y|}{\sqrt{1-a^2}},\quad
m=\frac{1-b}{1-a}.
\end{align}
Furthermore, the semiaxes of $\mathsf{E}_\theta$ formed by such X-states and ${P_\theta}$ are given by $m$ and
\begin{align}\label{ellipsemn}
n_\theta=\frac{\tau_\theta}{\sqrt{1-a^2}},\quad
\end{align}
where
\begin{align}
\tau_\theta=\frac{|t_xt_y|}{\sqrt{t_x^2(\sin{\theta})^2+t_y^2(\cos{\theta})^2}},
\end{align}
and one of Alice's measurements consists of effects which are eigenstates of $\sigma_z$. 
 Define $\theta$ as the angle of the plane (appearing in Theorem \ref{theorem}) from the $x$-axis, restricted to the interval $[0,\pi/2]$ without loss of generality. In addition, we assume without loss of generality that $|t_x|>|t_y|$, so that $\max_\theta n_\theta=n_x$ and $\min_\theta n_\theta=n_y$.

Since these states satisfy the requirements of the Scenario, we can apply Theorem \ref{theorem}. 
First, note that the intersection of any $P_\theta$ with the Bloch sphere gives the same circle, and so we drop the dependence of $\mathsf{C}$ on $\theta$.
 For tangent X-states, we have $R=1$, $\delta=0$ and $G=\sqrt{2}m$, and so the parameters in \eqref{abc} defining the required projective map $f_\theta$ are 
\begin{align}\label{xabc}
\begin{pmatrix}
 \alpha  \\
 \beta  \\
 \gamma
\end{pmatrix}
=\frac{1}{2m^{2}}\begin{pmatrix}
 m^2-n_\theta^2\\
 0  \\
2 m n_\theta^2
\end{pmatrix}.
\end{align}
We know that $\mathsf{b}$ can be expressed in Cartesian coordinates as $(u_b,v_b) = (1-b,0)$.
Using Eq.~\eqref{sinequ}, we immediately see that Alice can demonstrate EPR-steering in plane $P_\theta$ under the Scenario if and only if
\begin{align}\label{xs-steering}
\tau_\theta^2>(1-b)(b-a).
\end{align}
We can gain further insight about the importance of the plane which contains the mixed ensemble by studying the locus $\mathsf{L}$ of $\mathsf{h}$.
This is the set of points
\begin{equation}
    \mathsf{L} = \left\{ \left(\frac{2(1-b)^2}{\tau_\theta^2+(1-b)^2},0\right)\right\}_\theta.
\end{equation}
For tangent X-states, all points belonging to the locus lie on the $u$-axis.
In this case, there is a critical $\theta\in[0,\pi/2]$, corresponding to the point in the locus at $\mathsf{s}_{-|0}$, where equality holds between the left- and right-hand sides of Eq.~\eqref{xs-steering}. 
We can easily find that the locus contains all points on the line segment between points $\left(\frac{2(1-b)^2}{t_x^2+(1-b)^2},0\right)$ and $\left(\frac{2(1-b)^2}{t_y^2+(1-b)^2},0\right)$. 
If $t_x=t_y$, the locus reduces to a single point. 

Another derivation of \eqref{xs-steering} that uses only elementary geometry is as follows. 
Consider the unit circle $\mathsf{C}$ and an ellipse $\mathsf{E_\theta}$ with two semiaxes $m$ and $n_\theta$ shown in Fig.~\ref{X-states-geo}.
Note that the $m$-semiaxis aligns with the $z$-axis for tangent X-states.
The tangent point where $\mathsf{E_\theta}$ is tangent to $\mathsf{C}$, $\mathsf{p}$, represents the pure state which can be steered to.
\begin{figure}[h]\centering
\includegraphics[angle=0,width=0.42\linewidth]{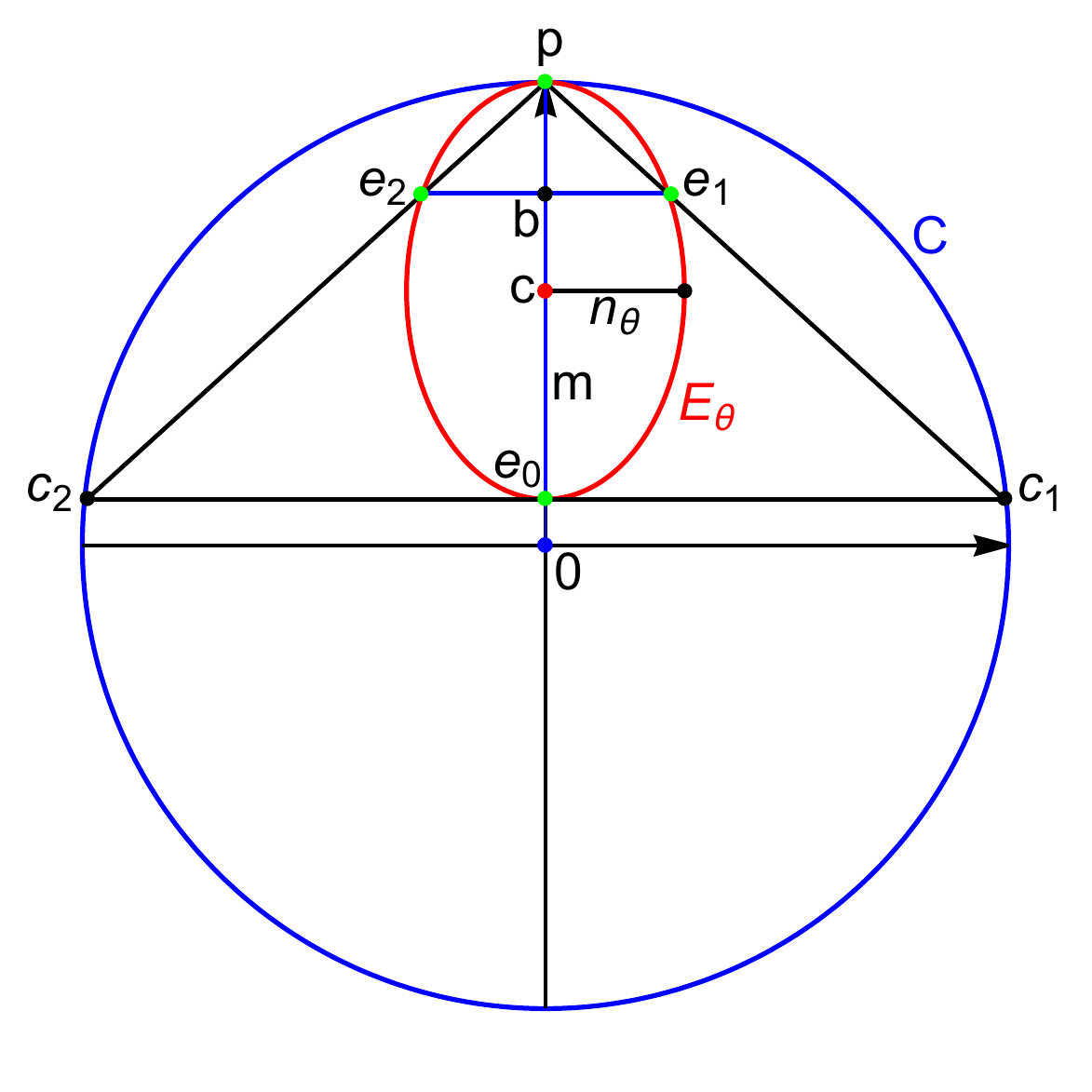}
\vspace{-10pt}
\caption{Geometric representation of steering with tangent X-states. 
Here, $m$ and $n_\theta$ are two semiaxes of the steered ellipse $\mathsf{E_\theta}$.
} 
\label{X-states-geo}
\end{figure}
According to Corollary \ref{corollary} and Lemma \ref{lemma3}, for a given plane ${P}_\theta$, the change of the position of the mixed ensemble (the one without the pure state) \emph{within} this plane does not change steerability. 
Therefore, we can choose the position of the second ensemble to be parallel to the $xy$-plane of the Bloch ball. 
From Fig.~\ref{X-states-geo}, we can  easily find the maximal distance between $\mathsf{p}$ and $\mathsf{b}$ for which $\mathsf{e}_0$ is outside the convex hull of the triangle $\triangle\mathsf{pc_1 c_2}$.
By Corollary \ref{corollary}, we know that for any $\mathsf{b}$ beyond this distance, the 1PQA will be non-steerable. 
This critical distance ${b}_c$ is given by
\begin{align}
{b}_c=1-\frac{2mn_\theta^2}{n_\theta^2+m-m^2}.
\end{align}
Hence Alice can demonstrate EPR-steering if and only if
\begin{align}\label{xs-steering2}
1-b<\frac{2mn_\theta^2}{n_\theta^2+m-m^2},
\end{align}
From Eq.~\eqref{ellipsemn} we can easily see that condition \eqref{xs-steering2} is equivalent to \eqref{xs-steering}.

\subsection{Canonical maximally obese states}
\label{sec:can_obese}
An interesting class of steering ellipsoids are those with the largest volume given its centre $\boldsymbol{c} = (0, 0, c)$, under the guarantee that the steering ellipsoid is physical.
These ellipsoids, $\mathcal{E}_{c}^{\max}$, are called \emph{maximally obese}~\cite{Mil14, Mil14a}, and correspond to the single-parameter $(0\leq c\leq 1)$ two-qubit state called canonical maximally obese states
\begin{equation}\label{cmos}
    \rho_c^\mathrm{max} (c) \coloneqq \left(1-\frac{c}{2} \right)\ketbra{\psi_c} + \frac{c}{2}\ketbra{00},
\end{equation}
where $\ket{\psi_c}\coloneqq (2-c)^{-1/2}(\sqrt{1-c} \ket{01} +\ket{10})$.
Maximally obese states have been previously studied in the literature, and were found to maximize concurrence, in addition to other measures of quantum correlations\cite{Mil14a}.  

This state has a steering ellipsoid with two equal major semiaxes $n_{1}=n_{2}=\sqrt{1-c}$ and minor semiaxis $m=1-c$. 
This is a special case of the tangent X-states considered above, allowing us to use  Eq.~\eqref{xs-steering2} to obtain 
\begin{align}\label{lvpe}
b>\frac{3c-1}{1+c}. 
\end{align}
From Eq.~\eqref{cmos}, we obtain $b=c$,
 and so Eq.~\eqref{lvpe} becomes
$c>\ro{3c-1}/\ro{1+c}$,
which is equivalent to
\begin{align}\label{lvpe3}
c<1,
\end{align}
This is always satisfied by maximally obese states when the steering ellipsoid has nonzero volume, $c\neq1$.  
Hence, the canonical maximally obese state is always steerable by a two-setting strategy, no matter how small the steering ellipsoid is, provided it does not reduce to a point (when the state reduces to the product state $\ket{00}$). 
 It is known (see Theorem~2 of Ref.~\cite{Mil14a}) that violation of the CHSH inequality can only be achieved for these states
when the centre of the ellipsoid satisfies $0\leq c\leq 1/2$. \blk
Steering, on the other hand, can be achieved for any $0\leq c < 1$, by steering to a pure state and \emph{any} other measurement.

\subsection{Tangent Sphere States}
\label{sec:tangent_spheres}

If the steering ellipsoid is a sphere with radius $r$, due to symmetry, all the planes passing through the tangent point $\mathsf{p}$ are equivalent to the single-parameter set
\begin{align}
\mathbb{P}=\{P_\theta \mid \theta\in[-\pi,\pi]\}.
\end{align}
where $\theta$ is the angle between the plane's normal vector and the $z$-axis.
We assume $\mathsf{b}$ can be located anywhere inside this $\mathcal{E}$, and refer to these ellipsoids as those corresponding to \emph{tangent sphere states}.
Considering Theorem \ref{theorem}, we can reverse the construction which utilizes projective transformations to certify steerability in the Scenario, as follows.
To apply the theorem with knowledge of $\mathsf{b}$, one needs to find its location after applying the projective map $f_\theta$.
However, knowing the geometry of $\mathsf{E}_\theta$ and $\mathsf{C}_\theta$ is sufficient to derive $f_\theta$.
By Theorem \ref{theorem}, if $P_\theta$ and $\mathsf{E}_\theta$ permit a steerable 1PQA, then $f_\theta(\mathsf{b})$ must be interior to $\mathsf{E}_\theta$.
Since this transformation is invertible, we can find the set of $\mathsf{b}$, which remain interior to $\mathsf{E}_\theta$ under $f_\theta$, by applying $f^{-1}_\theta$ to $\mathsf{E}_\theta$.

%Figure for three spheres
\begin{figure}[h]
\centering
\includegraphics[angle=0,width=0.33\linewidth]{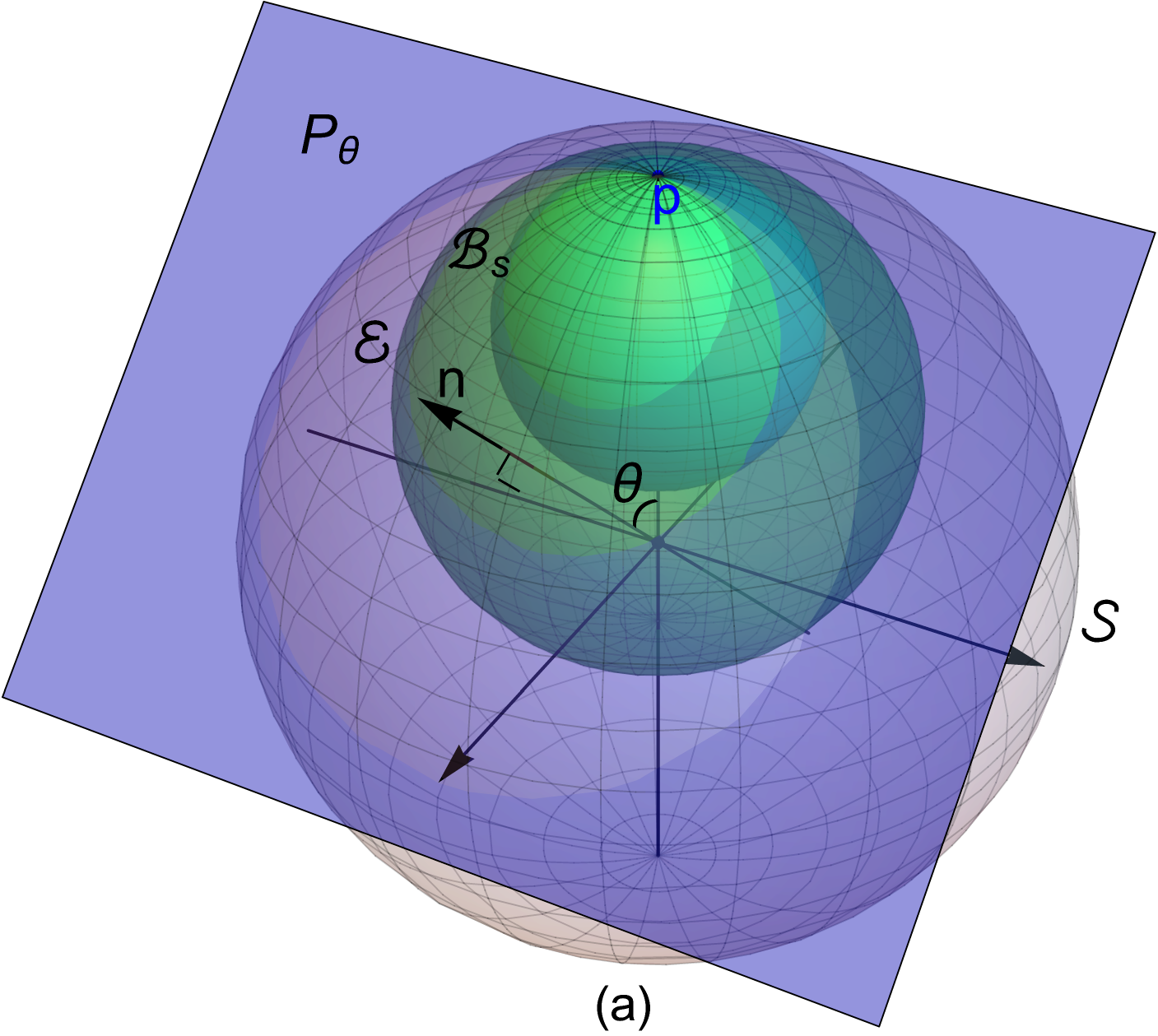}
\hspace{2pt}
\includegraphics[angle=0,width=0.27\linewidth]{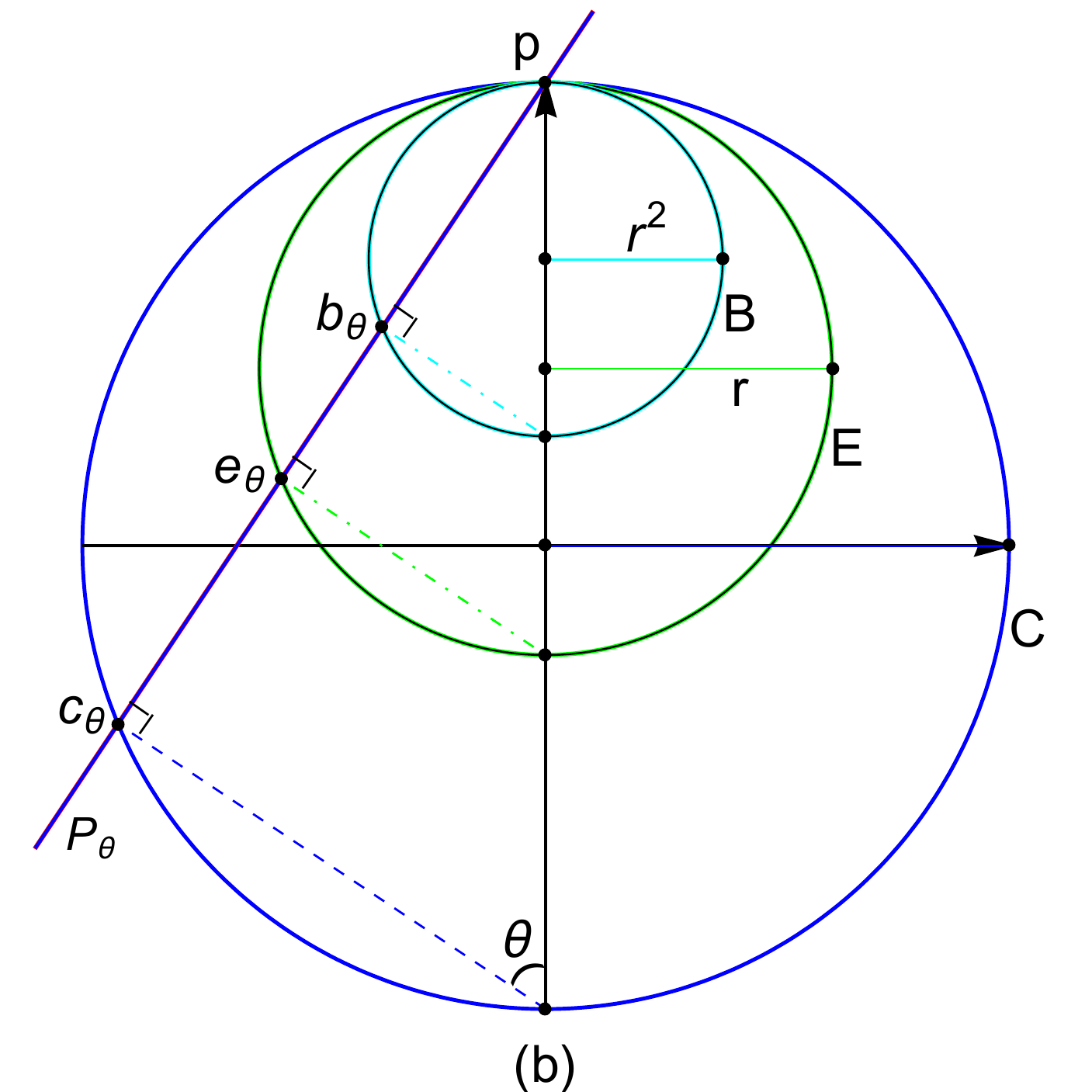}
\hspace{2pt}
\includegraphics[angle=0,width=0.27\linewidth]{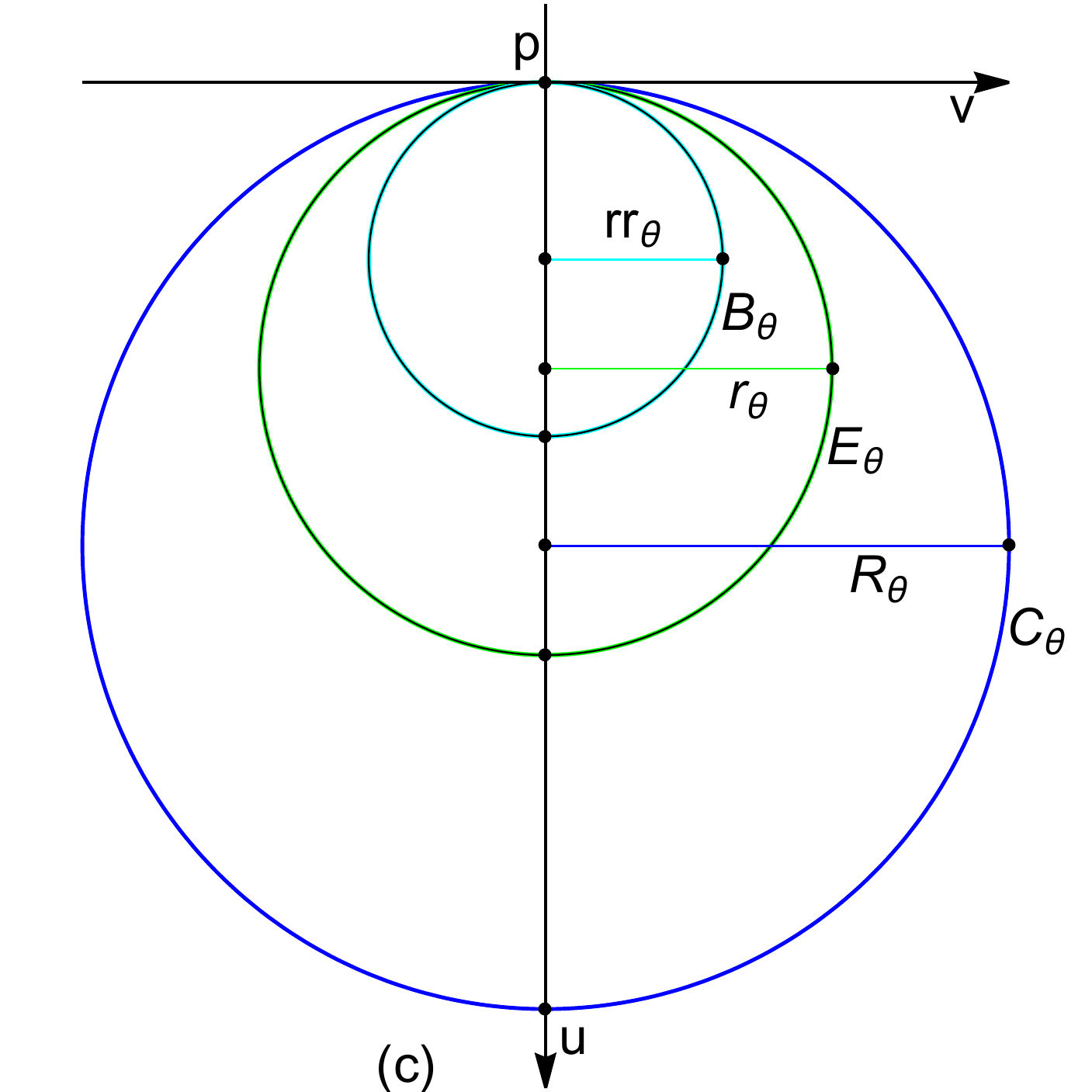}
\caption{
 Steering geometry for tangent sphere states which we use to prove steerability, using projective map inversions. 
If Bob's reduced state is inside to the inner sphere ${\cal B}_\mathrm{s}$, Alice can always steer with a two-setting strategy, as per the Scenario.
(a) A tangent-sphere steering ellipsoid $\mathcal{E}$ which touches the Bloch sphere $\mathcal{S}$ at $\mathsf{p}$.
The plane $P_\theta$ intersects these objects at $\mathsf{p}$, and is specified by $\theta$, the angle between a vector normal to the plane, and the $z$-axis.
(b) Cross section of the Bloch ball showing the position of $P_\theta$ (purple line) in relation to the Bloch sphere and the steering ellipsoid.
(c) Cross sections in the $P_\theta$ plane.
As usual, we work in the Cartesian coordinate system with the origin at the point $\mathsf{p}$.  
The cross sections of the plane $P_\theta$ and the spheres ${\cal S}$, ${\cal E}$ and ${\cal B}_\mathrm{s}$ are circles $\mathsf{C}_\theta$,  $\mathsf{E}_\theta$, and  $\mathsf{B}_\theta$, respectively.}
\label{spheres3}
\end{figure}

Consider Fig.~\ref{spheres3}(a), which illustrates an example of a steering sphere tangent to the Bloch sphere.
For any ${P}_\theta$ which cuts at the tangent point, the cross section of $\mathcal{S}$ and $\mathcal{E}$ will be circles $\mathsf{C}_\theta$ and  $\mathsf{E}_\theta$, with respective radii $R_\theta$ and $r_\theta$, with $R_\theta>r_\theta$. 
 In this case, the required planar homology, which maps $\mathsf{E}_\theta$ to $\mathsf{C}_\theta$,  in Eq.~\eqref{pmm}  is specified by $m=n=r_\theta$, $\delta=0$ and $G=\sqrt{2}r_\theta$, and so 
\begin{align}
H_{\theta}=\text{diag}\left(1,1,\frac{r_\theta}{R_\theta}\right).
\end{align}
The inverse of this matrix is
\begin{align}
H_{\theta}^{-1}=\text{diag}\left(1,1,\frac{R_\theta}{r_\theta}\right).
\end{align}
By reversing the projective map construction, and applying the map which takes $\mathsf{C}_\theta\mapsto\mathsf{E}_\theta$, we can find another conic nested inside it, which contains the set of $\mathsf{b}$ for which the 1PQA is steerable.
To this end, we observe Fig.~\ref{spheres3} (c) shows the intersection of $P_\theta$ with $\mathcal{S}$ and $\mathcal{E}$. 
The radius of $\mathsf{C}_\theta$ in this plane is given by $R_\theta=\sin\theta$, and the radius of  $\mathsf{E}_\theta$ is $r_\theta=r\sin\theta$. 
Therefore, the ellipse corresponds to the symmetric matrix
\begin{align}
E_\theta=\begin{pmatrix}
 1 & 0 & -r_\theta \\
  0 &  1 & 0 \\
 -r_\theta & 0 & 0
\end{pmatrix}.
\end{align}
By inverting the projective map and transforming this conic, we find
\begin{align}
H_\theta^\top E_\theta H_\theta =\begin{pmatrix}
 1 & 0 & -r^2\sin\theta \\
  0 &  1 & 0 \\
 -r^2\sin\theta & 0 & 0
\end{pmatrix},
\end{align}
which is a circle of radius $r^2\sin\theta$ in the plane $P_\theta$. 
By varying the plane $P_\theta$ over $\theta$, the union of all such inner circles forms a sphere of radius $r^2$ tangent to both the Bloch sphere and steering ellipsoid at $\mathsf{p}$, as shown in Fig.~\ref{spheres3}(a).
By Theorem~\ref{theorem}, any $\mathsf{b}$ inside this inner sphere will give rise to 1PQAs by which Alice steers Bob in the Scenario.

We can further relate ${\cal B}_\mathrm{s}$ to the probability that the steered state at $\mathsf{p}$ appears in one of Bob's ensembles.
Define $\mathsf{q}$ as a point of intersection between the line passing through $\mathsf{pb}$ and the surface of the steering ellipsoid ${\cal E}$.
If $\mathsf{b}$ is interior to the inner sphere ${\cal B}_\mathrm{s}$, we have
\begin{align}\label{ball1}
\frac{|{\mathsf{pb}}|}{|{\mathsf{pq}}|}<r.
\end{align}
We also know that 
\begin{align}\label{ball2}
{|{\mathsf{pb}}|}=(1-p_\mathsf{p}){|{\mathsf{pq}}|},
\end{align}
where $p_\mathsf{p}$ is the probability of Alice steering Bob’s system to the pure state. 
Hence, from the above two Eqs.~\eqref{ball1} and \eqref{ball2}  we obtain
\begin{align}\label{tssp}
p_\mathsf{p} > 1-r
\end{align}
as the condition required for steering in the Scenario for tangent sphere states.
In words, this inequality means that if and only if the probability $p_\mathsf{p}$ of Alice steering Bob’s system to a pure state is greater than $1-r$, with $r$ the radius of his steering sphere, then Alice can demonstrate EPR-steering in the Scenario (Def.~\ref{scenario}). 

\section{The power of one pure steered state}
\label{sec:one_pure_state_power}

In the previous section, we inverted the construction based on projective maps to find a range of probabilities for Bob's pure steered state to be necessary and sufficient to prove steerability in the Scenario, for the particular case of tangent sphere states. 
We finish by generalizing this result to arbitrary states with tangent steering ellipsoids.
The main result of this section is the following theorem, which contains one necessary and one sufficient condition, based on the location of Bob's reduced state.

\begin{theorem}\label{theorem2}
Consider the Scenario (Def.~\ref{scenario}), wherein $p_\mathsf{p}$ denotes the probability that Bob is steered to a pure state for one of Alice's measurements.  Given only the steering ellipsoid, ${\cal E}$, 
\begin{enumerate}
\item there exists a $p_{\text{max}}^{\mathcal{E}}\in [0, 1)$ such that $p_\mathsf{p}>p_{\text{max}}^{\mathcal{E}}$ is a {\em sufficient} condition for Alice to demonstrate EPR-steering using only this measurement and {\em any} one other dichotomic projective measurement;
\item there exists a $p_{\text{min}}^{\mathcal{E}}\in [0, p_{\text{max}}^{\mathcal{E}}]$ such that $p_\mathsf{p}>p_{\text{min}}^{\mathcal{E}}$ is a {\em necessary} condition for Alice to demonstrate EPR-steering using only this measurement and {\em some} other dichotomic projective measurement.
\end{enumerate}
\end{theorem}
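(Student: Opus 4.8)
The plan is to reduce EPR-steerability in the Scenario, for a \emph{fixed} ellipsoid $\mathcal{E}$, to a single scalar inequality $p_\mathsf{p} > \pi_\theta$ attached to each plane, and then to obtain $p_{\text{max}}^{\mathcal{E}}$ and $p_{\text{min}}^{\mathcal{E}}$ by optimising $\pi_\theta$ over the plane angle $\theta$ and over the direction of the chord $\mathsf{pb}$. First I would fix $\mathcal{E}$ and its tangent point $\mathsf{p}$, take an interior point $\mathsf{b}$, and let $\mathsf{q}$ be the far intersection of the ray $\mathsf{pb}$ with $\partial\mathcal{E}$, so that $p_\mathsf{p} = 1-|\mathsf{pb}|/|\mathsf{pq}|$ by Eq.~\eqref{ball2}. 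Because each planar homology $f_\theta$ fixes $\mathsf{p}$ and sends every point to a collinear image, all of the points $f_\theta(\mathsf{b})$ lie on the line $\mathsf{pb}$; thus the locus $\mathsf{L}$ is the segment already identified before Corollary~\ref{corollary2}. By Theorem~\ref{theorem}, steering in $P_\theta$ holds iff $|\mathsf{p}f_\theta(\mathsf{b})| < |\mathsf{pq}|$, so steering for \emph{every} second measurement is equivalent to $\max_\theta |\mathsf{p}f_\theta(\mathsf{b})| < |\mathsf{pq}|$ (i.e.\ $\mathsf{L}\subset\mathcal{E}^{\circ}$), and steering for \emph{some} second measurement to $\min_\theta |\mathsf{p}f_\theta(\mathsf{b})| < |\mathsf{pq}|$ ($\mathsf{L}\cap\mathcal{E}^{\circ}\neq\emptyset$), recovering the three cases of Corollary~\ref{corollary2}.

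Next I would make the distance explicit. Writing $\mathsf{b}=(u_b,v_b)$ in the plane coordinates of Fig.~\ref{pbh}, Eq.~\eqref{pmm} gives $f_\theta(\mathsf{b}) = \mathsf{b}/\Lambda_\theta$ with $\Lambda_\theta = \alpha_\theta u_b + \beta_\theta v_b + \gamma_\theta$, so $|\mathsf{p}f_\theta(\mathsf{b})| = |\mathsf{pb}|/\Lambda_\theta$. Since $\mathsf{b} = (1-p_\mathsf{p})\mathsf{q}$ along the ray, $\Lambda_\theta = (1-p_\mathsf{p})\Phi_\theta + \gamma_\theta$, where $\Phi_\theta := \alpha_\theta u_q + \beta_\theta v_q$ and $\gamma_\theta$ depend (via Eq.~\eqref{abc}) only on $\mathcal{E}$, the direction of $\mathsf{pb}$, and $\theta$. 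The per-plane condition $\Lambda_\theta > 1-p_\mathsf{p}$ then rearranges to
\[
p_\mathsf{p} > \pi_\theta := 1 - \frac{\gamma_\theta}{1-\Phi_\theta}.
\]
Because $\mathsf{E}_\theta$ lies strictly inside $\mathsf{C}_\theta$ (on each ray from $\mathsf{p}$ the ellipse is met before the circle), evaluating $f_\theta(\mathsf{q})\in\mathsf{C}_\theta$ yields $0<\Phi_\theta+\gamma_\theta\le 1$ with $\gamma_\theta>0$; hence $1-\Phi_\theta>0$ and $\pi_\theta\in[0,1)$. Checking this reduction against the closed form Eq.~\eqref{sinequ} is routine.

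Finally I would set $p_{\text{max}}^{\mathcal{E}} := \sup_{\hat{\mathsf{d}}}\max_\theta \pi_\theta$ and $p_{\text{min}}^{\mathcal{E}} := \inf_{\hat{\mathsf{d}}}\min_\theta \pi_\theta$, the supremum and infimum taken over the direction $\hat{\mathsf{d}}$ of $\mathsf{pb}$. If $p_\mathsf{p} > p_{\text{max}}^{\mathcal{E}}$ then for the actual direction $p_\mathsf{p} > \max_\theta \pi_\theta$, so steering holds in every plane, giving the sufficient (first) claim; and since steering in some plane requires $p_\mathsf{p} > \min_\theta \pi_\theta \ge p_{\text{min}}^{\mathcal{E}}$, the necessary (second) claim follows. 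Monotonicity gives $p_{\text{min}}^{\mathcal{E}}\le p_{\text{max}}^{\mathcal{E}}$, and $\pi_\theta\ge 0$ gives $p_{\text{min}}^{\mathcal{E}}\ge 0$. For $p_{\text{max}}^{\mathcal{E}}<1$ I would argue by compactness: $(\hat{\mathsf{d}},\theta)$ ranges over a compact set, $\pi_\theta$ is continuous and strictly below $1$, and $\gamma_\theta$ is bounded below by a positive constant because every plane through the interior chord $\mathsf{pb}$ cuts $\mathcal{E}$ in a nondegenerate ellipse. Tightness, and the coincidence $p_{\text{max}}^{\mathcal{E}}=p_{\text{min}}^{\mathcal{E}}$ for spheres, then follow from the extrema being attained: for $m=n$, $\delta=0$, Eq.~\eqref{abc} gives $\alpha=\beta=0$ and $\gamma=r$, so $\pi_\theta\equiv 1-r$, reproducing Eq.~\eqref{tssp}.

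The main obstacle I anticipate is the double optimisation in this last step---showing the supremum is attained with value strictly below $1$. The delicate regime is that of near-tangent chord directions $\hat{\mathsf{d}}$, where $\mathsf{q}\to\mathsf{p}$ and the cross-sections degenerate; controlling $\gamma_\theta$ and $\Phi_\theta$ uniformly there, and restricting to the physically admissible positions of $\mathsf{b}$ (which is what makes the bounds \emph{tight} rather than merely valid), is where the real work lies.
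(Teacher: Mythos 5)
Your proposal is correct and follows essentially the same route as the paper: it reduces steerability via Theorem~\ref{theorem} and the planar homology to a per-plane threshold on $p_\mathsf{p}$ (your $\pi_\theta = 1-\gamma_\theta/(1-\Phi_\theta)$ is algebraically identical to the paper's $p(k,\mathsf{b}_k)$ in Eq.~\eqref{pureprob}), and then defines $p_{\text{max}}^{\mathcal{E}}$ and $p_{\text{min}}^{\mathcal{E}}$ as the extrema of that threshold over the two-parameter family of chord directions and planes. The only substantive difference is that the paper carries out the inner optimisation in closed form (stationary slopes $k_{1,2}$, extrema $p^{\pm}$, with the $\beta=0$ case treated separately), while the attainment/uniformity issue you flag for near-tangent directions is left implicit in the paper's proof as well.
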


\begin{figure}[h]\centering
\includegraphics[angle=0,width=0.42\linewidth]{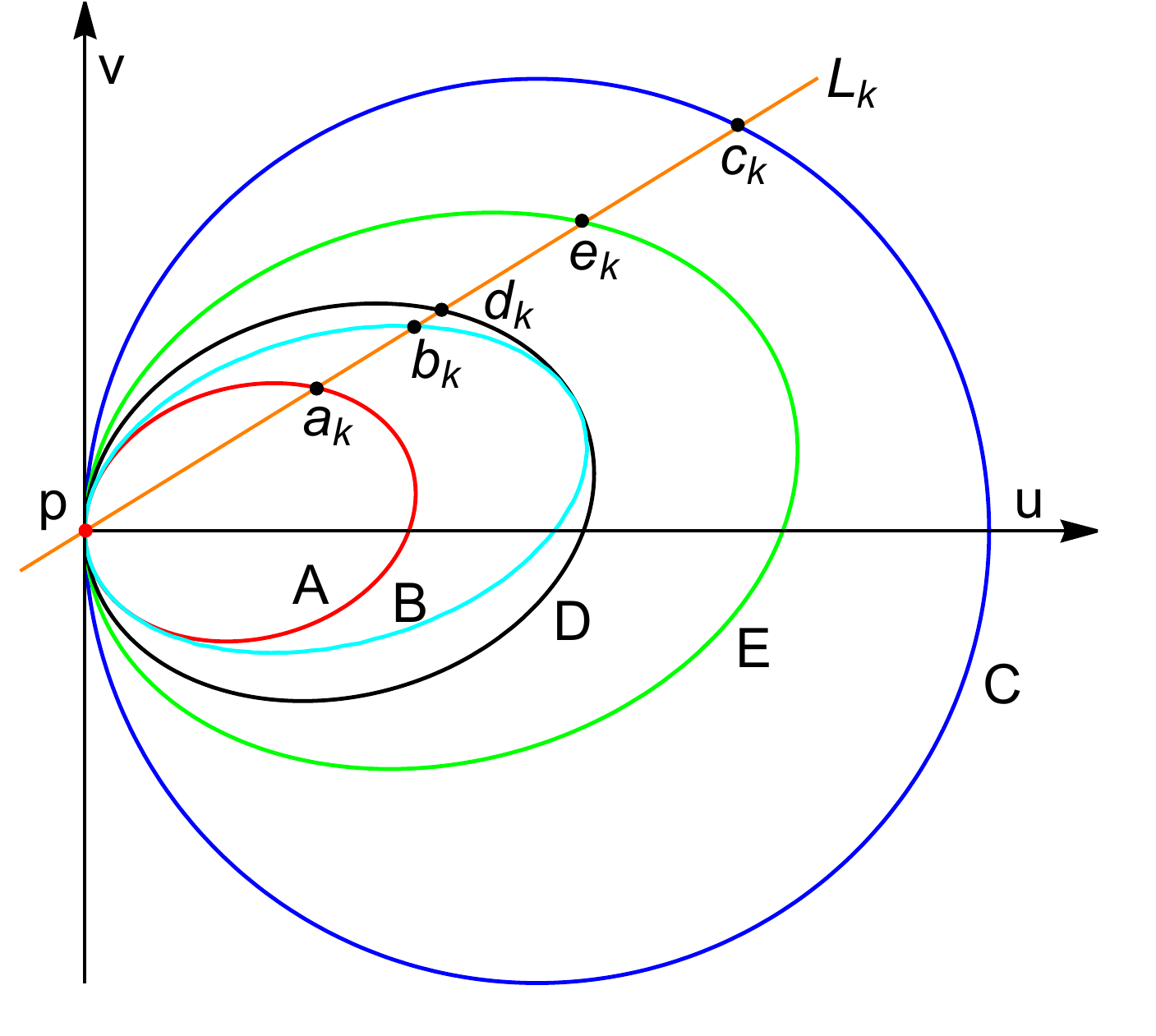}
\vspace{-10pt}
\caption{
The cross sections of a general plane $P_{\theta\phi}$ with the sphere ${\cal S}$ and steering ellipsoid ${\cal E}$.
The cross sections of these are denoted by the circle $\mathsf{C}$ and ellipse $\mathsf{E}$, respectively. 
The ellipse $\mathsf{B}$ comes from applying to $\mathsf{E}$ the inverse of the planar homology $f$ of centre $\mathsf{p}$ that maps $\mathsf{E}$ to $\mathsf{C}$.  
The ellipses $\mathsf{A}$ and $\mathsf{D}$ are obtained by uniformly shrinking $\mathsf{E}$ towards $\mathsf{p}$ to find nested ellipses which are contained within it, and contain it, respectively. 
The ellipses A and D relate directly to $p^\mathcal{E}_{\text{max}}$ and $p^\mathcal{E}_{\text{min}}$ appearing in Theorem~\ref{theorem2}, respectively. 
}
\label{five}
\end{figure}

\begin{proof}
Consider the two-parameter family of planes $P_{\theta\phi}$ whose intersection with the Bloch sphere contains the tangent point $\mathsf{p}$. 
 Each of these can be defined by the normal vector $\hat{\boldsymbol{n}}_{\theta\phi}=(\sin\theta\cos\phi,\sin\theta\sin\phi,\cos\theta)$, $\theta\in [0,\pi], \phi\in [0,2\pi)$.
 For the Scenario (Def.~\ref{scenario}), one such plane is shown in Fig.~\ref{five}, where the cross sections of Bob’s Bloch sphere $\mathcal{S}$ and the steering ellipsoid $\mathcal{E}$ are circles $\mathsf{C_{\theta\phi}}$ and ellipses $\mathsf{E_{\theta\phi}}$. 
For convenience, we will omit the parameters $\theta$ and $\phi$ in the notation when there is no risk of confusion. 
Using the matrix form of $H$ (see Eq.~\eqref{pmm}),  we can invert the projective map and apply it to $\mathsf{E}$, yielding a new ellipse $\mathsf{B}$, defined by  $B=H^{\top}EH$. 
We know from Theorem \ref{theorem} that Alice can demonstrate EPR-steering in the Scenario if and only if the point $\mathsf{b}$, corresponding to  Bob's reduced state, is interior to $\mathsf{B}$. 

In each plane, we consider a $u$-$v$ coordinate system defined such that its origin is at $\mathsf{p}$ and the centre of $\mathsf{C}$ lies along the positive half of the $u$-axis. 
Consider all lines $\mathsf{L}_k$ ($v=ku$) with finite slope $k$ passing through $\mathsf{p}$. Besides $\mathsf{p}$, $\mathsf{L}_k$ intersects the circle $\mathsf{C}$ at the point $\mathsf{c}_k$, and the ellipses $\mathsf{E}$ and $\mathsf{B}$ at points $\mathsf{e}_k$ and $\mathsf{b}_k$, respectively. 
In the $uv$-plane, 
this point of intersection with $\mathsf{C}$ is given by $\mathsf{c}_k=(u_{\mathsf{c}_k},v_{\mathsf{c}_k})$, where
\begin{align}
u_{\mathsf{c}_k}=\frac{2R}{1+k^2},\quad v_{\mathsf{c}_k}=\frac{2kR}{1+k^2}.
\end{align}
Now, the inverse of the projective transformation matrix in Eq.~\eqref{pmm} is
\begin{align}
{H}^{-1}=\begin{pmatrix}
 1 & 0 & 0 \\
 0 &  1 & 0 \\
-\frac{\alpha}{\gamma} &-\frac{\beta }{\gamma}  & \frac{1}{\gamma}
\end{pmatrix}.
\end{align}
 Therefore, we can get the coordinates of the point where $\mathsf{L}_k$ intersects $\mathsf{E}$ (other than $\mathsf{p}$) as $\vec{\mathsf{e}}_{k}={H}^{-1}\vec{\mathsf{c}}_{k}$,
where $\vec{\mathsf{c}}_{k}=(u_{\mathsf{c}_k},v_{\mathsf{c}_k},1)^\top$, and
\begin{align}
&u_{\mathsf{e}_k}=\frac{2R \gamma}{1+k^2-2R(\alpha+k\beta)},\quad
v_{\mathsf{e}_k}=\frac{2kR \gamma}{1+k^2-2R(\alpha+k\beta)}.
\end{align}
We can also obtain the coordinates of the point $\vec{\mathsf{b}}_{k}={H}^{-1}\vec{\mathsf{e}}_{k}$, i.e.,
\begin{align}
u_{\mathsf{b}_k}=\frac{2R \gamma^2}{1+k^2-2R(1+\gamma)(\alpha+k\beta)},\quad
v_{\mathsf{b}_k}=\frac{2kR \gamma^2}{1+k^2-2R(1+\gamma)(\alpha+k\beta)}.
%&u_{b_k}=\frac{2R \gamma^2}{1+k^2-2R\alpha(1+\gamma)-2kR\beta(1+\gamma)},\\
%&v_{b_k}=\frac{2kR \gamma^2}{1+k^2-2R\alpha(1+\gamma)-2kR\beta(1+\gamma)}.
\end{align}

We know that a 1PQA prepared for Bob in the context of the Scenario is steerable iff $\mathsf{b} \in \mathsf{B^{\circ}} \coloneqq f^{-1}(\mathsf{E^{\circ}})$; c.~f.~Theorem~\ref{theorem}.
Any valid $\mathsf{b}$ must lie along one of the lines $\mathsf{L}_k$.
For any such line, we observe that the position of $\mathsf{b}$, relative to $\mathsf{p}$ and $\mathsf{e}_k$, determines the probability of Bob's pure state, $p_\mathsf{p}\coloneqq p(k, \mathsf{b})$, appearing in that ensemble.
Specifically, this probability is
\begin{equation}\label{pp}
p(k, \mathsf{b}) = 1-\frac{|\mathsf{pb}|}{|\mathsf{pe}_k|},
\end{equation}
where we have explicitly included $\mathsf{b}$ as an argument in the function, for clarity below.
Theorem~\ref{theorem2} implies that $\mathsf{b}\in\mathsf{L}_k$ corresponds to a steerable tangent steering ellipsoid in the Scenario iff it is on the open line segment $\mathsf{pb}_k$. 
This is equivalent to the condition $|\mathsf{pb}|<|\mathsf{pb}_k|$, or, expressed as an inequality involving the probability of Bob's pure steered state,
\begin{equation}
p(k, \mathsf{b}) > p(k, \mathsf{b}_k),
\label{eq:suff_prob_condition}
\end{equation}
where
\begin{align}
p(k, \mathsf{b}_k)
&=1-\frac{|\mathsf{pb}_k|}{|\mathsf{pe}_k|} \\
&=\frac{(1+k^2)(1-\gamma)-2R(\alpha+k\beta)}{1+k^2-2R(1+\gamma)(\alpha+k\beta)}.\label{pureprob}
\end{align}

To derive the sufficient condition in statement (i) of the Theorem, we need to ensure that Eq.~\eqref{eq:suff_prob_condition} holds for all $k$.
This can be transformed to an inequality independent of $k$,  by choosing quantity on the left side, $p(k, \mathsf{b})$, to be equal to a constant larger than the maximum value of the right side, $\max_k p(k, \mathsf{b}_k)$.
Conversely, to derive the necessary condition in part (ii) of the Theorem, we require $p(k, \mathsf{b}) > \min_k p(k, \mathsf{b}_k)$ for all $k$.
Similarly, we can achieve such an inequality by considering the set of $\mathsf{b}$ for which $p(k, \mathsf{b})$ is a constant larger than $\min_k p(k, \mathsf{b}_k)$.
To this end, we find the stationary points of $p(k, \mathsf{b}_k)$.

When $\beta\neq 0$, these occur for the slopes $k_1, k_2$, where
\begin{align}
k_1=\frac{-\alpha-\sqrt{\alpha^2+\beta^2}}{\beta},\quad
k_2=\frac{-\alpha+\sqrt{\alpha^2+\beta^2}}{\beta}.
\end{align}
Denoting the extrema by $p^{+} \coloneqq p(k_1, \mathsf{b}_k)$ and $p^{-} \coloneqq  p(k_2, \mathsf{b}_k)$, we compute
\begin{align}\label{probpn}
&p^{\pm}=\frac{1-\gamma -R^2\beta ^2 (1+\gamma) -R\alpha  \left(2-\gamma ^2\right) \pm R\gamma ^2 \sqrt{\alpha ^2+\beta ^2}}{1-R(1+\gamma)  \left(2 \alpha +R\beta ^2 (1+\gamma)\right)}.
\end{align}
In order to compare $p^{+}$ and $p^{-}$, we calculate
\begin{align}
p^{+}-p^{-}
=\frac{2R\gamma ^2 \sqrt{\alpha ^2+\beta ^2}}{1-R(1+\gamma)  \left(2 \alpha +R\beta ^2 (1+\gamma)\right)}
=\frac{2(\alpha^2+\beta^2)^\frac{3}{2}}{R\beta^2\gamma^2}u_{\mathsf{b}_{k_1}}u_{\mathsf{b}_{k_2}}>0,
\end{align}
where $u_{\mathsf{b}_{k_1}}>0$, $u_{\mathsf{b}_{k_2}}>0$, because for any $k$, $u_{\mathsf{b}_k}>0$. 
Hence, we can deduce that $p^{+}$ is the maximum value of $p(k, \mathsf{b}_k)$ with respect to $k$, and $p^{-}$ is its minimum value. 
Therefore,
\begin{align}
p_{\text{max}}=p^{+},\quad p_{\text{min}}=p^{-}.
\end{align}

 For the case where $\beta= 0$, Eq.~\eqref{pureprob} reduces to
\begin{align}\label{probzero}
p(k, \mathsf{b}_k)
=\frac{(1+k^2)(1-\gamma)-2R\alpha}{1+k^2-2R\alpha(1+\gamma)}.
\end{align}  
The stationary point occurs when $k=0$, and so  
\begin{align}\label{pkzero}
p(0, \mathsf{b}_k) 
=\frac{1-\gamma-2R\alpha}{1-2R\alpha(1+\gamma)}=:p^0.
\end{align} 
In order to judge whether this is a  maximum or minimum value of $p(k, \mathsf{b}_k)$,
we calculate the value of the second derivative of Eq.~\eqref{probzero} at $k=0$, 
\begin{align}\label{secdrt}
\pdv[2]{p(k, \mathsf{b}_k)}{k} \Bigr|_{k=0}  =\frac{4 \alpha  \gamma ^2 R}{(1-2 \alpha  (\gamma +1) R)^2}.
\end{align} 
 There are three relevant cases, based on the sign of $\alpha$.
The first is where $\alpha>0$, and so Eq.~\eqref{secdrt} is positive.
Then, 
\begin{align}\label{appp1}
p_{\text{min}}=p^{0},\quad
p_{\text{max}}=\lim_{k\to\infty}p(k, \mathsf{b}_k)=1-\gamma.
\end{align}
If $\alpha<0$, then Eq.~\eqref{secdrt} is negative, and we have
\begin{align}\label{appp2}
p_{\text{max}}=p^{0},\quad
p_{\text{min}}=1-\gamma.
\end{align}
Finally, if $\alpha=0$, then Eq.~\eqref{probzero} reduces to $1-\gamma$. 
This means that
\begin{align}\label{appp3}
p_{\text{max}}=p_{\text{min}}=1-\gamma.
\end{align}

All of the above considerations are in a particular plane $P_{\theta\phi}$.
Restoring these labels, we write $p_{\text{max}}$ and $p_{\text{min}}$ as $p_{\text{max}}^{\theta\phi}$ and $p_{\text{min}}^{\theta\phi}$, respectively. 
Considering all the planes $P_{\theta\phi}$, we arrive at the statement of the Theorem, by defining the probabilities which appear there as
\begin{align}\label{maxminp}
p_{\text{max}}^{\mathcal{E}} \coloneqq \max_{\theta\phi}p_{\text{max}}^{\theta\phi},\quad
p_{\text{min}}^{\mathcal{E}} \coloneqq \min_{\theta\phi}p_{\text{min}}^{\theta\phi}.
\end{align}
\end{proof}

There is an interesting geometric interpretation of Eq.~\eqref{eq:suff_prob_condition} and its converse.
For any plane $P_{\theta\phi}$, the set of points $\mathsf{b}$ for which $p(k, \mathsf{b})$ is equal to the constant $p_{\text{max}}$ are given by an 
affine ``shrinking'' transformation of the ellipse $\mathsf{E}$.
In terms of conics, this transformation is given by $A=(F_A^{-1})^\top E F_A^{-1}$, where
\begin{align}
{F}_A^{-1}=\text{diag}\left(1,1,1-p^{+}\right).
\end{align}
Analogously, the set of $\mathsf{b}$'s for which $p(k, \mathsf{b})$ is equal to the constant $p_{\text{min}}$ is equivalent to the affine transformation $E\mapsto D=(F_D^{-1})^\top E F_D^{-1}$ where
\begin{align}
{F}_D^{-1}=\text{diag}\left(1,1,1-p^{-}\right).
\end{align}
These two \emph{shrunken} ellipses are shown in Fig.~\ref{five} as the ellipses $\mathsf{A}: \{\mathsf{a}_k|\overrightarrow{\mathsf{pa}}_k=(1-p_{\text{max}})\overrightarrow{\mathsf{pe}}_k\}$ and $\mathsf{D}: \{\mathsf{d}_k|\overrightarrow{\mathsf{pd}}_k=(1-p_{\text{min}})\overrightarrow{\mathsf{pe}}_k\}$, where $\mathsf{a}_k$ and $\mathsf{d}_k$ are points on the ellipse $\mathsf{A}$ and $\mathsf{D}$ corresponding to point $\mathsf{e}_k$, respectively.
In terms of the Theorem, for any $\textsf{b}$, consider the one-parameter family of planes ${\cal P}_b=\{P_\delta| \delta \in [ 0,2\pi)\}$, containing the line $\mathsf{pb}$,  where $P_{\delta}$ is the plane rotated by the angle $\delta$ around the line $\mathsf{pb}$. 
If $\mathsf{b}$ is interior to $\mathsf{A}$ in $ P_\mathsf{\delta}$ for all $\delta$, we know it satisfies the sufficient condition for steering in part (i).
 Moreover, if $\mathsf{b}$ is interior to $\mathsf{D} $ in $ P_\mathsf{\delta}$ for some $\delta$, it satisfies the necessary condition, as per statement (ii).

\subsection{Examples} 
\label{sec:final_examples}

Now we apply Theorem~\ref{theorem2} to two families of states. The first is the family of tangent X-states previously considered in Sec~\ref{sec:tangent_x_states}. The second is a new family, which we call {\em tangent spheroid states}, which includes the family of the tangent sphere states, Sec.~\ref{sec:tangent_spheres}, 
as a special case. (Note that the canonical obese states, Sec.~\ref{sec:can_obese}, are a special case of the tangent spheroid states.)

We can deal with tangent X-states in a single paragraph. Recall from Sec~\ref{sec:tangent_x_states} that this is a 4-parameter family of states, of which 3 parameters define the ellipsoid ${\cal E}$ and one the Bloch vector of Bob's reduced state, as this is guaranteed to be on the $u$-axis. 
This means that we only need to consider the planes $\{P_{\theta}, \theta\in [0,\pi/2]\}$ containing the $u$-axis, hence $R=1$ and $k=0$. 
Using Eqs.~\eqref{xabc}, 
\eqref{pkzero}, and \eqref{maxminp}, we derive
\begin{align}
& p_{\text{max}}^{\mathcal{E},\text{X}}=\frac{m(1-m)}{m(1-m)+n^2_y},\\
& p_{\text{min}}^{\mathcal{E},\text{X}}=\frac{m(1-m)}{m(1-m)+n^2_x},
\end{align}
 and we remind the reader that $n_x$, $n_y$, and $m$ are three semiaxes of the steering ellipsoid with $n_x>n_y$. 
That is, from Theorem~\ref{theorem2}, we have a sufficient condition and a necessary condition, for Alice to demonstrate EPR-steering in the Scenario, by any or some second measurement respectively, on the probability that Bob's one pure state appears in one of his ensembles,  in terms of his steering ellipsoid ${\cal E}$.

The family of tangent spheroid states is a more interesting application of Theorem~\ref{theorem2}. 
For this class, the steering ellipsoid is a spheroid with two equal and one distinct semiaxes, and touches the surface of Bob's Bloch ball at the endpoint of the distinct semiaxis. 
 There are no restrictions on the location of $\mathsf{b}$ inside this spheroid. Thus, this is a 5-parameter family of states, but the steering ellipsoid ${\cal E}$ is specified by just two. (Moreover, by the rotational symmetry of ${\cal E}$ around the $u$-axis, one parameter in $\mathsf{b}$ is uninteresting, so it could always be defined as a 4-parameter family of states.) 
As shown in Fig.~\ref{spheroid} (a), 
 we begin by defining a $(u,v,w)$ coordinate system with the origin at $\mathsf{p}$.
The equation of the spheroid can be written as
\begin{align}
\frac{(u-m)^2}{m^2}+\frac{v^2}{n^2}+\frac{w^2}{n^2}=1, 
\end{align}
 and the equation describing Bloch sphere is
\begin{align}
(u-1)^2+v^2+w^2=1.
\end{align}

\begin{figure}[h]\centering
\includegraphics[angle=0,width=0.35\linewidth]{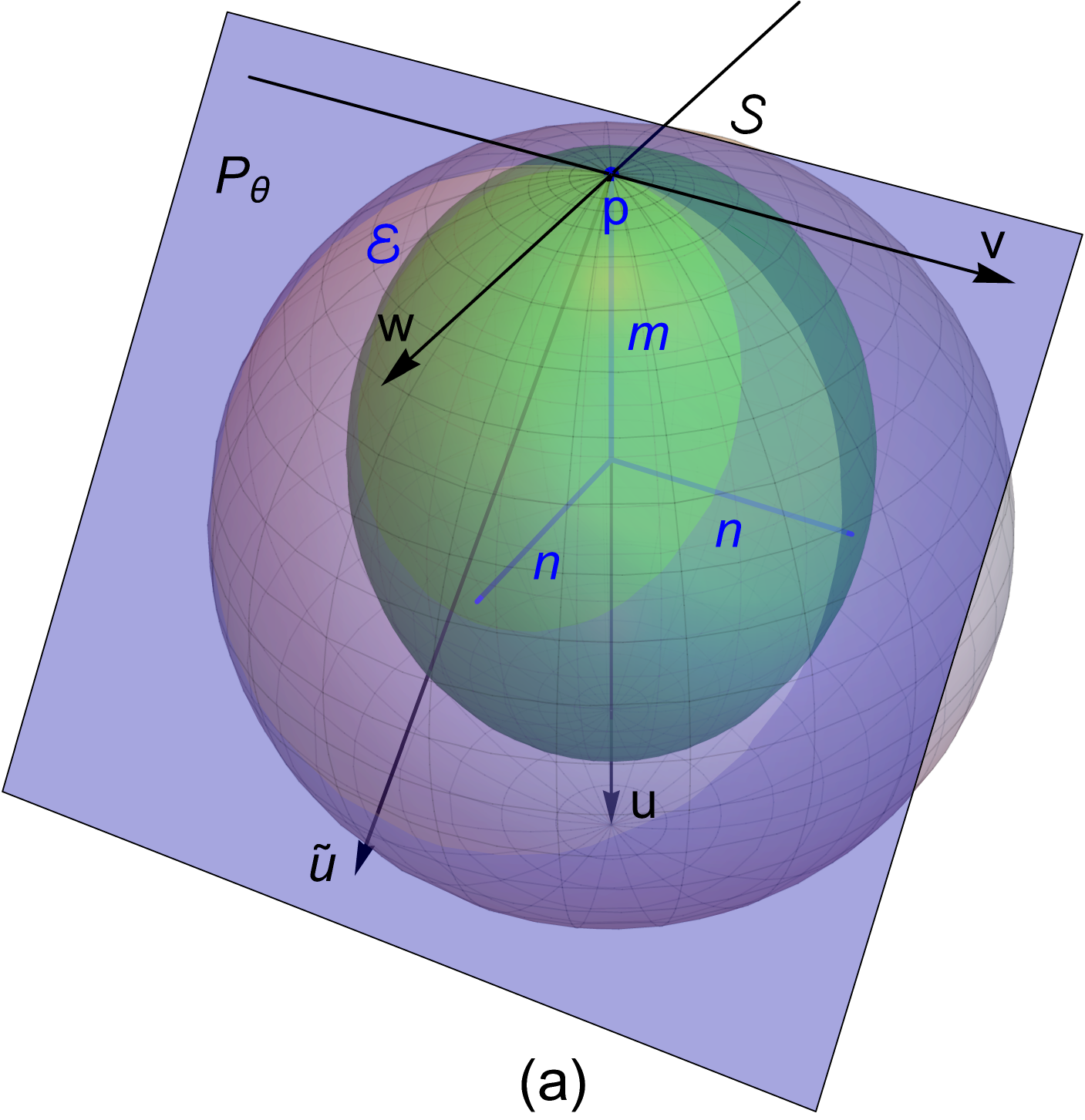}
\hspace{17pt}
\includegraphics[angle=0,width=0.31\linewidth]{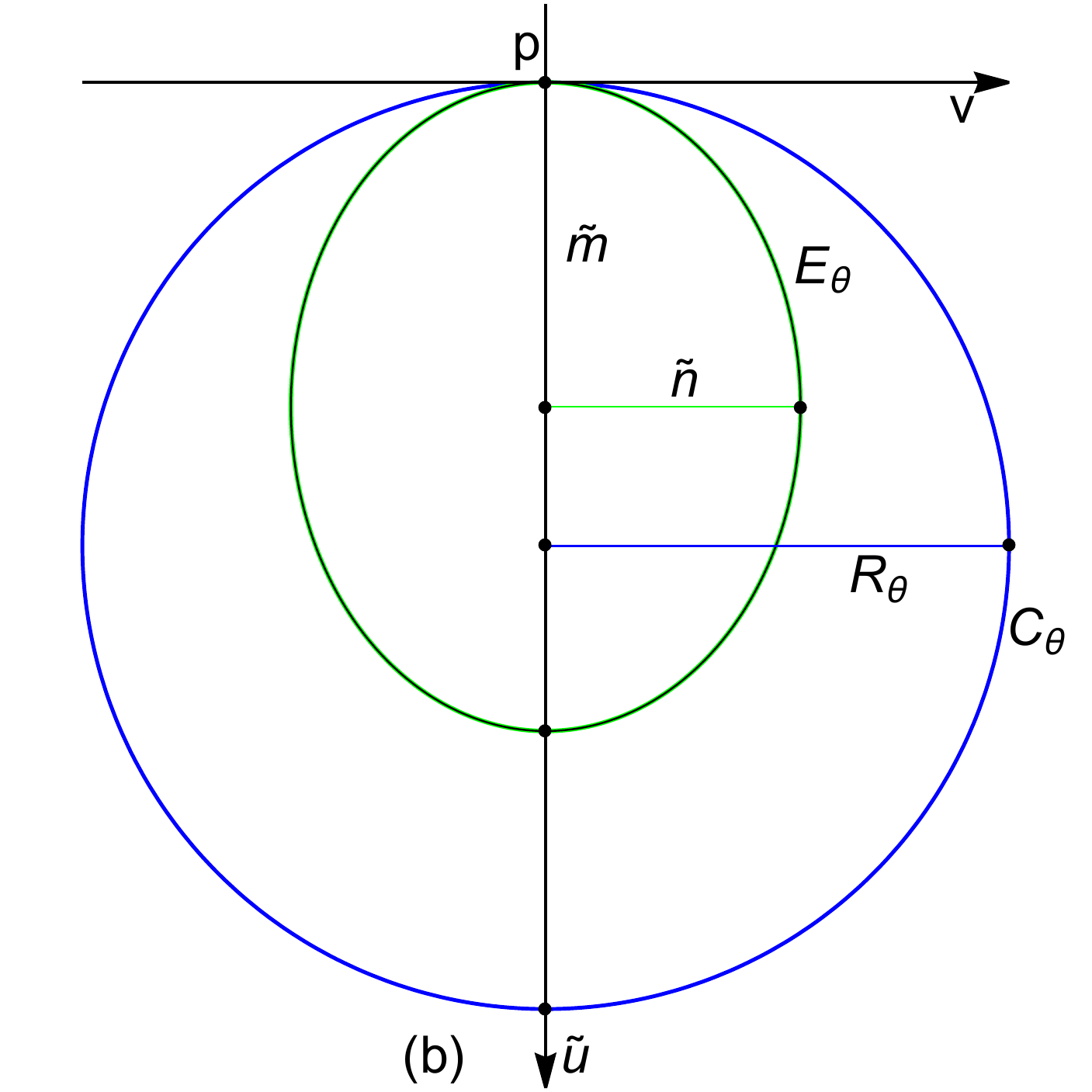}
\vspace{-10pt}
\caption{The cross sections of a plane $P_{\theta}$ with the Bloch sphere ${\cal S}$ and steering ellipsoid (spheroid) ${\cal E}$.
 These cross sections are a circle $\mathsf{C_{\theta}}$ and ellipse $\mathsf{E_{\theta}}$, respectively. 
For $P_\theta$, we define a $(\tilde{u}, v)$ Cartesian coordinate system originating at $\mathsf{p}$, where the $\tilde{u}$-axis is defined by the intersection of the planes $uw$ and $P_\theta$. }
\label{spheroid}
\end{figure}

Due to the symmetry of the spheroid under rotations about the $u$-axis, we need to consider only a single-parameter set of planes $\mathbb{P}=\{P_\theta \mid \theta\in[0,\pi/2]\}$, which pass through $\mathsf{p}$, defined by the angle of the plane with the $u$-axis.
 Each plane cuts both the spheroid and Bloch sphere, and these intersections are an ellipse $\mathsf{E}_\theta$ and a circle $\mathsf{C}_\theta$.
 Since we can choose all planes in this set to include the $v$-axis, we can describe points in any $P_\theta$ according to a $(\tilde{u}, v)$ Cartesian coordinate system with its origin at $\mathsf{p}$.
The relevant geometry in a particular plane is illustrated in Fig~\ref{spheroid}~(b). 
The equation of the ellipse $\mathsf{E}_\theta$ can be expressed in the $(\tilde{u},v)$ coordinate system as
\begin{align}
\frac{(\tilde{u}-\tilde{m})^2}{\tilde{m}^2}+\frac{v^2}{\tilde{n}^2}=1,
\end{align}
where the $\tilde{u}$-axis is the intersection of the planes $uw$ and $P_\theta$, and 
\begin{align}\label{mntheta}
\tilde{m} \coloneqq  \frac{m n^2 \cos (\theta )}{m^2 \sin ^2(\theta )+n^2 \cos ^2(\theta )},\quad
\tilde{n} \coloneqq  \frac{n^2 \cos (\theta )}{\sqrt{m^2 \sin ^2(\theta )+n^2 \cos ^2(\theta )}}.
\end{align}
The equation of the circle $\mathsf{C}_\theta$ is
\begin{align}
(\tilde{u}-R_\theta)^2+v^2=R_\theta^2,
\end{align}
where $R_\theta=\cos(\theta)$. 
 We can now directly apply the methodology used in the proof of  Theorem~\ref{theorem2}. 
 The projective transformation is defined by 
\begin{align}
\begin{pmatrix}
 \alpha  \\
 \beta  \\
 \gamma
\end{pmatrix}
=\frac{1}{2\tilde{m}^{2}\cos(\theta)}\begin{pmatrix}
 \tilde{m}^2-\tilde{n}^2\\
 0  \\
2 \tilde{m}\tilde{n}^2
\end{pmatrix}.
\end{align}

There are three cases to compute, based on the relative sizes of the semiaxes of $\mathsf{E}_\theta$.
First, if $\tilde{m}>\tilde{n}$, using Eq.~\eqref{appp1} and \eqref{maxminp}, we find
\begin{align}
p_{\text{min}}^{\mathcal{E},\text{S}}=\frac{(1-m) m}{m(1-m)+n^2},\quad
p_{\text{max}}^{\mathcal{E},\text{S}}=1-\frac{n^2}{m}.
\end{align}
 However, if  $\tilde{m}<\tilde{n}$, using Eq.~\eqref{appp2} and \eqref{maxminp}, we have
\begin{align}
p_{\text{max}}^{\mathcal{E},\text{S}}=\frac{(1-m) m}{m(1-m)+n^2},\quad
p_{\text{min}}^{\mathcal{E},\text{S}}=1-\frac{n^2}{m}.
\end{align}
 Finally, when  $\tilde{m}=\tilde{n}$, then according to Eq.~\eqref{mntheta}, $m=n$. Using Eq.~\eqref{appp3} and \eqref{maxminp}, we obtain
\begin{align}\label{sphpp}
p_{\text{max}}^{\mathcal{E},\text{S}}=p_{\text{min}}^{\mathcal{E},\text{S}}=1-m.
\end{align}
In this last case, $m=n$, the tangent spheroid states reduce to the case of tangent sphere states, discussed in Section~\ref{sec:tangent_spheres}.
There, in Eq.~\eqref{tssp}  we obtained the necessary and sufficient condition $p_\mathsf{p} > 1-r$ to demonstrate EPR-steering.
This condition is in fact equivalent to Eq.~\eqref{sphpp}. \blk
From the point of view of Theorem~\ref{theorem2}, this is because $p_{\rm max}^{{\cal E}}$ is equal to $p_{\rm min}^{{\cal E}}$ for all tangent sphere states. 
That is, Theorem~\ref{theorem2} is tight for tangent sphere states.

\section{Conclusion}
\label{sec:Conclusion}

In this paper, we considered  
the simplest scenario for bipartite steering---when Alice attempts to steer Bob by making two dichotomic measurements---under the condition that for one and only one of the outcomes of one and only one of those measurements, Bob is steered to a pure state. We derived some general results using elementary planar geometry before restricting to the case where Alice's and Bob's systems are qubits, in which steering is characterized by Bob's Bloch vector $\mathsf{b}$ and his steering ellipsoid ${\cal E}$. 
Using tools from projective geometry, we found a simple characterization of steerability in this context (Theorem~\ref{theorem}). 
We applied this to three families of entangled states, finding both necessary conditions and sufficient conditions (which for one family coincide) on $\mathsf{b}$ and ${\cal E}$ for EPR-steerability. 
Finally, we considered what one can say if one knows only the ellipsoid ${\cal E}$, which is of non-zero volume and touches the Bloch sphere exactly once.
We showed that there always exists two non-trivial bounds on the probability that Bob is steered to the pure state which provides, respectively,  \emph{sufficient} and \emph{necessary} conditions, for Alice to demonstrate EPR-steering using the pure-state-inducing measurement and {\em any} or {\em some} (respectively) other  dichotomic projective measurement (Theorem~\ref{theorem2}).

We conclude with a discussion of open questions. 
There are a few interesting directions for generalizing the link between steering ellipsoid geometry, projective geometry, and EPR-steerability. 
The first involves going beyond the above scenario to consider assemblages with three or more ensembles in a single plane intersecting the Bloch sphere, one of which contains a pure state.
In this case, is there a simple geometric characterization of steerability (akin to Lemma~\ref{lemma2}) which has interesting properties under the image of a projective transformation? 
The second direction goes beyond assemblages of finite ensembles, to consider directly the geometry of general steering ellipsoids touching Bob's Bloch sphere at one point.
Previously, steering ellipsoid geometry was used to exactly characterize the steerability of mixtures of Bell states under all projective measurements~\cite{Jev15, Ngu16}.
Can we gain insight from projective geometry, when one of these projectors steers to a pure state? 
We conjecture, in light of Theorem~\ref{theorem2}, that there exists a smooth convex hull---which may be an ellipsoid, or something more exotic---contained inside the steering ellipsoid which also touches at the pure state, such that if the Bob's reduced state is strictly inside it,
Alice can be able to demonstrate EPR-steering. 
Furthermore, can the method of projective geometry be applied to higher dimensions in a generalized manner? 
Finally, it is possible that projective geometry could be used to derive useful conditions for EPR-steering even in cases without pure steered states.
In our proofs, we constructed special types of projective transformations, called a planar homology, which had the pure steered state at the centre of its projection in all planes of the Bloch sphere.
Are there other types of projective transformations useful to geometrically characterize steerability?

\section{Acknowledgement}

We thank Richard Gill and Bas Edixhoven for pointing out the connection to projective geometry in Lemma \ref{lemma3}. 
Qiu-Cheng Song acknowledges support by a cotutelle Scholarship from Griffith University and University of Chinese Academy of Sciences. 
This work was supported by the ARC Centre of Excellence for Quantum Computation and Communication Technology (CQC2T), project number CE170100012.

\section*{References}
\bibliographystyle{vancouver}
\bibliography{main.bib}
\end{document}